\documentclass[11pt]{article}

\usepackage[T1]{fontenc}
\usepackage[a4paper,margin=1in]{geometry}
\usepackage{amsmath,amssymb,amsthm}
\usepackage{booktabs}
\usepackage{array}
\usepackage{enumitem}
\usepackage[section]{placeins}
\usepackage[expansion=false]{microtype}
\usepackage{xcolor}
\usepackage{tikz}
\usetikzlibrary{arrows.meta,positioning}
\usepackage{hyperref}
\usepackage[authoryear,round]{natbib}

\newcolumntype{P}[1]{>{\raggedright\arraybackslash}p{#1}}

\hypersetup{
  colorlinks=true,
  linkcolor=blue!55!black,
  citecolor=blue!55!black,
  urlcolor=blue!55!black,
    pdftitle={Cost Sharing with Hidden Time Flexibility},
    pdfauthor={Mohsen Pourpouneh and Farzaneh Rajabighamchi},
    pdfsubject={Cost sharing with partially verifiable service-time flexibility},
    pdfkeywords={cost sharing, core allocations, partial verification,
                 interval scheduling, coalitional manipulation}
}

\newtheorem{theorem}{Theorem}[section]
\newtheorem{proposition}[theorem]{Proposition}
\newtheorem{lemma}[theorem]{Lemma}

\theoremstyle{definition}

\newtheorem{example}[theorem]{Example}
\theoremstyle{remark}

\newcommand{\R}{\mathbb{R}}

\title{Cost Sharing with Hidden Time Flexibility}
\author{Mohsen Pourpouneh, Farzaneh Rajabighamchi%
\thanks{Department of Data Analytics and Digitalisation,
School of Business and Economics, Maastricht University,
Maastricht, the Netherlands.
E-mail addresses:
\href{mailto:m.pourpouneh@maastrichtuniversity.nl}
{m.pourpouneh@maastrichtuniversity.nl} (M. Pourpouneh) and
\href{mailto:f.rajabighamchi@maastrichtuniversity.nl}
{f.rajabighamchi@maastrichtuniversity.nl} (F. Rajabighamchi).\\
We thank Burak Can, Janos Flesch, Alexander Grigoriev and Marc Schroder for their comments during the development of this work. }}

\begin{document}
\maketitle

\begin{abstract}
Customers often specify acceptable time intervals for receiving a service. When service on a date entails a fixed activation cost, overlapping intervals allow joint service and cost sharing. Customers may nevertheless conceal flexibility by reporting a smaller interval, affecting both operating cost and its allocation. We ask whether an operator can select a core allocation at every reported profile without making concealment profitable.

We establish sharp population boundaries for preventing such coalitional reductions. When intervals may be narrowed from either end, a core-selecting rule preventing such reductions exists if and only if there are at most three customers. With four or more customers, impossibility holds even under Pareto contraction-proofness, which excludes deviations making every contracting customer weakly better off and at least one strictly better off. The impossibility persists when each contracting customer conceals only an arbitrarily small fraction of her feasible interval.

The boundaries change when either the earliest or latest acceptable date of every customer is verifiable. With common verified dates, equal sharing is core-selecting and prevents such reductions for any population size. When verified dates may differ, the boundary rises to five: such a rule exists for up to five customers, whereas with six or more no core-selecting rule prevents the contracting customers from reducing their combined payment. Under Pareto contraction-proofness, however, core selection is possible for every population size on either one-sided domain. Finally, the relaxations needed for approximate core selection and coalitional contraction-proofness must grow with population size.
\end{abstract}

\noindent\textbf{Keywords:} cost sharing; interval scheduling; core selection; coalition manipulation; relative concealment ratio.
\medskip

\section{Introduction}\label{sec:introduction}
Many fixed-cost services can be provided jointly when customers' acceptable time windows overlap. Consider several firms at an industrial park that use a common carrier to send small loads to a port terminal. Each load has a departure window, beginning when it becomes ready and ending at the latest departure compatible with its delivery deadline. Dispatching a truck entails a fixed cost and, when capacity is not binding, loads with overlapping windows can travel together and share that cost. By claiming that its load will be ready later or must leave earlier, a firm can appear less flexible and may alter its share of the cost.

This example captures a broader cost-sharing problem. Reporting a smaller interval is credible,
 whereas reporting dates outside the true interval may result in service at an unacceptable time.  Because reported intervals determine both the number of service dates and how the resulting cost is divided, the operator must select core allocations across reports without rewarding such concealment. Each allocation must recover the minimum activation cost while ensuring that no coalition pays more than its stand-alone cost.

We consider four incentive requirements. Individual contraction-proofness (ICP) prevents a customer from lowering her payment by unilaterally reporting a smaller interval. Coalitional contraction-proofness (CCP) prevents the customers who report smaller intervals from reducing their combined payment. Strong coalitional contraction-proofness (SCCP) requires that no coalition containing all customers who report smaller intervals can reduce its total payment. Such a coalition may also include customers whose reports remain unchanged. Hence, SCCP implies CCP, which in turn implies ICP. We also consider Pareto contraction-proofness, which prevents contractions that weakly lower every contracting customer's payment and strictly lower at least one. CCP implies Pareto contraction-proofness, since every such contraction also reduces the contracting customers' combined payment.

Our first results concern the case in which each customer may report any nonempty subinterval of her true interval. We call this the two-sided domain because a customer may increase the left endpoint, decrease the right endpoint, or do both. On this domain, we show that a rule can be both core-selecting and ICP, regardless of the number of customers. A population bound arises only when coalitional deviations are considered: we construct a core-selecting SCCP rule for up to three customers, whereas no core-selecting rule satisfies even CCP from four customers onward. The impossibility continues to hold when customers conceal only an arbitrarily small fraction of their flexibility. Importantly, the model imposes no directional restriction on a customer: she may contract from the left, from the right, or from both sides. The result also extends to approximate rules: repeating the construction across disjoint four-customer groups shows that the permitted violations of the core and CCP requirements must increase with the number of customers.

The four-customer obstruction relies on deviations that narrow intervals in opposing directions. This leads us to ask what happens when either every left endpoint or every right endpoint is verifiable and cannot be changed. In the freight example, warehouse records may verify ready dates, while contractual deadlines and known travel times may verify latest departure dates. In this domain, we show that when every customer has the same endpoint, e.g., a common terminal cutoff, there exists a cost-sharing rule that is core-selecting and SCCP for any number of customers. When the verified endpoint values may differ, the exact population boundary rises from three to five, in this domain compared to the two-sided domain, and no core-selecting rule satisfies even CCP from six customers onward. The six-customer impossibility also persists under arbitrarily small concealment, and replicating the construction yields population-dependent lower bounds for approximate rules.  We further show that the four-customer impossibility holds under Pareto contraction-proofness. On either one-sided domain, a core-selecting rule satisfying this weaker requirement exists for every number of customers. Thus, although the two-sided population
boundary remains three, verifying one endpoint removes the population restriction under Pareto contraction-proofness.

The remainder of the paper is organized as follows. Section~\ref{sec:literature} discusses the related literature, Section~\ref{sec:model} presents the model and incentive requirements. Sections~\ref{sec:boundary} and~\ref{sec:onesided} establish the exact population boundaries, Section~\ref{sec:robustness} presents the robustness and approximate results. Finally, Section~\ref{sec:conclusion} discusses Pareto contraction-proofness, the Shapley value and the nucleolus, and concludes the paper.

\section{Related Literature}\label{sec:literature}
Our analysis is related to the literature on cooperative minimum-coloring games. For a fixed interval profile, the service problem can be represented as a coloring problem in which customers with disjoint intervals are incompatible. Core allocations and cost-sharing rules for minimum-coloring and related combinatorial optimization games have been studied extensively \citep{deng1999,bietenhader2006,okamoto2008,hamers2014,baheltrudeau2022}. This literature treats the incompatibility graph, and hence the cooperative cost game, as fixed. In our setting, reported intervals determine the graph, which may change when customers conceal flexibility. In contrast, we consider whether core allocations can be selected consistently across the cost games induced by different reports.

The incentive question connects our work to strategy-proof cost sharing. \citet{moulin1999} characterizes incremental cost-sharing mechanisms using coalition strategy-proofness, while \citet{moulinshenker2001} study strategy-proof cost sharing for submodular costs and the tension between efficiency and budget balance. This literature generally allows agents to report valuations, demands, or participation decisions, and the mechanism determines both who is served and how much they pay. In contrast, all customers in our model must be served; their reports concern when they can be served, not whether they receive service. These reports can alter both the minimum operating cost and the cooperative game from which payments are selected.

Our reporting restriction is a form of partial verification: a customer may hide feasible service dates but cannot credibly report dates outside her true interval. \citet{greenlaffont1986} model such restrictions through type-dependent message spaces, while subsequent work examines how verifiability and evidence affect implementation \citep{singhwittman2001,kartiktercieux2012}. In our setting, admissible reports are ordered by interval inclusion: a customer may increase the left endpoint, decrease the right endpoint, or do both. When one endpoint is verifiable, only one of these directions remains available. \citet{krahmerstrausz2025} similarly study how restricting misreporting to one direction affects incentive compatibility. We examine how this distinction affects core selection when customers may coordinate their reports.

Several papers study strategic behavior in interval and scheduling environments. \citet*{deligkas2024} consider truthful interval covering without transfers. \citet{tamir2023} studies a noncooperative real-time scheduling game in which jobs with release dates and deadlines share server activation costs under a given proportional rule. \citet*{vijayalakshmi2026} analyze equilibrium and inefficiency when players strategically place jobs in an interval-scheduling game. \citet{munich2024} studies cooperative schedule games for access to a non-rival resource, including core nonemptiness and equal-pooling allocations. These models share either the interval structure or the fixed activation cost of our setting. Our setting differs in that all customers must be served and the payment rule must select core allocations consistently across reported profiles while preventing profitable concealment of flexibility.

Time windows are also central to vehicle routing and service operations. Classical work develops routing and scheduling methods for exogenously given time windows \citep{solomon1987,solomondesrosiers1988}, while later studies examine time-slot design, pricing, and customer incentives in attended delivery and service systems \citep{agatz2011,klein2019,vinsensius2020,ulmer2024,visser2024}. This literature typically treats time windows as operational constraints or choices from a menu and focuses on routing costs, demand management, or service quality. Cooperative-game methods have also been used to allocate the gains from transportation collaboration \citep{frisk2010,guajardoronnqvist2016}. Our model does not include routing or capacity constraints. Instead, reported time windows determine the number of fixed-cost activations required to serve each coalition. We ask when core allocations can be selected across reports without making it profitable for customers, acting alone or together, to hide flexibility.

\section{Model and Notation}\label{sec:model}

Let $N=\{1,\ldots,n\}$ be the set of customers, and let $\mathcal{I}_n$ be the set of all profiles of nonempty compact real intervals. Let \(I=(I_i)_{i\in N}\in\mathcal{I}_n\) denote the profile of customers' feasible intervals. That is, customer $i$ can be served on any date in the interval $I_i=[\ell_i,r_i]\subseteq\R$ with $\ell_i\leq r_i$. Let $J=(J_i)_{i\in N}\in\mathcal{I}_n$ denote the submitted profile satisfying $J_i\subseteq I_i$ for every $i\in N$. That is, the customers are assumed not to report dates outside their true feasible intervals, since doing so could result in service on a date that does not satisfy their requirements. Therefore, we restrict attention to reports that conceal feasible dates. Let $F>0$ denote the service activation cost on each date. We assume that the activation cost is independent of the number of customers served on that date.  For a group of customers $S\subseteq N$, let $\tau_I(S)$ denote the minimum number of dates needed to serve all customers in $S$.  Formally, 
\begin{equation*}
\tau_I(S)=
\min\bigl\{|T|:T\subseteq\R\text{ finite and }
T\cap I_i\neq\varnothing \text{ for every }i\in S\bigr\}.   
\end{equation*}

The associated cooperative cost of a coalition $S$ is denoted by $c_I(S)=F \tau_I(S)$, and $c_I(\varnothing)=0$.
A cost-sharing rule is a mapping $x:\mathcal{I}_n \rightarrow \mathbb{R}^N$, where $x(I) = (x_i(I))_{i\in N}$.  A cost-sharing rule $x$ is \emph{core-selecting} whenever $\sum\limits_{i\in N}x_i(I) = c_I(N)$ and $\sum\limits_{i\in S}x_i(I) \leq c_I(S)$ for every $S\subseteq N$. This ensures that the collected payments equal the
operator's cost and no group pays more than it would cost that group to arrange service on its own.


We denote the set of customers who submit a strictly smaller interval by
$D(I,J)=\{i\in N:J_i\subsetneq I_i\}$. We refer to the members of
$D(I,J)$ as the \emph{contracting customers}. We distinguish the following
contraction domains:
\begin{itemize}
    \item \emph{Two-sided} if
    $J_i=[\ell_i',r_i']$ with
    $\ell_i\leq\ell_i'\leq r_i'\leq r_i$ for every $i\in N$.
    A contracting customer may change the left endpoint, the right endpoint,
    or both.
    \item \emph{Left-only} if $J_i=[\ell_i',r_i]$ with
    $\ell_i\leq\ell_i'\leq r_i$ for every $i\in N$. Here the right
    endpoint is verified.
    \item \emph{Right-only} if $J_i=[\ell_i,r_i']$ with
    $\ell_i\leq r_i'\leq r_i$ for every $i\in N$. Here the left
    endpoint is verified.
\end{itemize}


A cost-sharing rule \(x\) is \emph{individually contraction-proof}
(ICP) if a customer cannot reduce her payment by unilaterally
contracting her interval. Formally, for every profile
\(I\in\mathcal{I}_n\), every admissible submitted profile
\(J\in\mathcal{I}_n\), and every customer \(i\in N\) such that
\(D(I,J)=\{i\}\),
\begin{equation}
x_i(J)\geq x_i(I).
\label{eq:icp}
\end{equation}




The next condition, rules out a reduction in the coalition's total payment, even when coalition members can redistribute
gains among themselves. In particular, it rules out any deviation that weakly lowers every
coalition member's payment and strictly lowers at least one member's payment. Formally, a cost-sharing rule $x$ is \emph{strongly coalitionally contraction-proof} (SCCP) if for every profile $I$, every admissible submitted profile $J$, and every coalition \(C\subseteq N\) satisfying \(D(I,J)\subseteq C\),
\begin{equation}
\sum_{i\in C}x_i(J)\geq\sum_{i\in C}x_i(I).
\label{eq:sccp}
\end{equation}

A weaker requirement is obtained by restricting attention to coalitions consisting only of contracting customers. Under this interpretation, every member of the manipulating coalition must report a strictly smaller interval. Therefore the customers that do not misreport are not included in the coalition (as well as share in its payment gain). Formally, a cost-sharing rule $x$ is \emph{coalitionally contraction-proof} (CCP) whenever 
\begin{equation}
\sum_{i\in D(I,J)}x_i(J)
\geq
\sum_{i\in D(I,J)}x_i(I).
\label{eq:CCP}
\end{equation}
for every admissible pair $I,J$ with $D(I,J)\neq\varnothing$. 




\begin{example}\label{ex:calibration}
Five firms at an industrial park use a common carrier to send small loads to a port terminal.  Let $F=100$ be the cost of each truck dispatch. Suppose their acceptable departure intervals are $I=\bigl([0,2], [1,4], [3,7], [6,9], [8,11]\bigr)$. It is easy to verify that $\tau_I(N)=3$, and that the  unique core allocation is $x(I)=(100,0,100,0,100)$. Now let firms $\{1, 3, 5\}$ to report $J_1 = [0,0.5] \subset I_1$, $J_3 = [3.5, 6.5] \subset I_3$, and $J_5 = [10, 11] \subset I_5$. That is, $D(I, J) =\{1, 3, 5\}$. After these misreport, the number of required dispatches increases to four. The corresponding core allocation is again unique and is given by $x(J) = (100, 100, 0, 100, 100)$. Although the total service cost increases from $300$ to $400$, the combined payment of the contracting customers decreases. Before the contraction, the contracting customers pay $\sum\limits_{i\in D(I,J)} x_i(I)  = 300$. After the contraction, they pay $\sum_{i\in D(I,J)} x_i(J) = 200$.  Therefore, $\sum_{i\in D(I,J)} x_i(J) < \sum_{i\in D(I,J)} x_i(I)$.  This inequality violates CCP. Since SCCP implies CCP, the same deviation also violates SCCP. 
\end{example}

\section{Two-Sided Domain}
\label{sec:boundary}

This section considers the two-sided domain, in which every customer may independently contract from the left, from the right, or from both sides, and examines the compatibility of core selection with ICP, SCCP, and CCP. We first establish core selection and ICP for every population size, then present the possibility and impossibility results and derive the exact population boundaries.

\subsection{Core selection and ICP for all \texorpdfstring{$n$}{n}}
\label{subsec:greedy-icp}

We first show that a core-selecting ICP rule exists for every number of
customers. Consider the following right-endpoint greedy procedure.

Starting with all customers unserved, select a customer with the
smallest right endpoint and activate service at that endpoint. All
customers whose intervals contain this date are then served and
removed. Repeat this procedure until every customer is served. Ties
are broken by the smallest customer index.

Let \(P(I)\) be the set of customers selected by this procedure. Define
the \emph{greedy rule} by
\begin{equation}
x_i^{\mathrm G}(I)
=
\begin{cases}
F, & i\in P(I),\\
0, & i\notin P(I).
\end{cases}
\label{eq:greedy-rule}
\end{equation}
Thus, every selected customer pays the cost of one activation, while
all other customers pay zero.

\begin{theorem}[Core selection and ICP]
\label{thm:greedy-icp}
The greedy rule is core-selecting and ICP on the
two-sided domain.
\end{theorem}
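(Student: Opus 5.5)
The proof has two parts: core selection and ICP.

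\emph{Core selection.} First I show that the greedy procedure is optimal. The selected customers in $P(I)$ have pairwise disjoint intervals. Indeed, if $j$ is selected after $i$, then $j$ was not served at the date $r_i$. Also $r_j\ge r_i$, because $i$ had the smallest right endpoint among unserved customers. Hence $\ell_j>r_i$. So any feasible set of dates for $N$ must contain at least $|P(I)|$ dates, while the greedy procedure uses exactly $|P(I)|$ dates. This gives $\tau_I(N)=|P(I)|$ and budget balance $\sum_i x^{\mathrm G}_i(I)=F|P(I)|=c_I(N)$.

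For the coalition constraints, fix $S\subseteq N$. Then $\sum_{i\in S}x^{\mathrm G}_i(I)=F|S\cap P(I)|$. The intervals of $S\cap P(I)$ are pairwise disjoint, and each date hits at most one of them. Therefore $\tau_I(S)\ge|S\cap P(I)|$, and so $\sum_{i\in S}x^{\mathrm G}_i\le c_I(S)$.

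\emph{ICP.} Suppose $D(I,J)=\{i\}$. The only way ICP can fail is if $i\in P(I)$ but $i\notin P(J)$, so I will show that $i\in P(I)$ implies $i\in P(J)$. The key observation is that the greedy run on $J$ coincides with the run on $I$ up to the first step where $i$ matters.

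I compare the two runs step by step. Consider a step of the run on $I$ at which $i$ is still unserved, and suppose both runs have the same unserved set before this step. If the run on $I$ selects some $k\neq i$, then $k$ has the smallest right endpoint (with index tie-breaking) among unserved customers. Since $r_i'\le r_i$, it is possible that $i$ now beats $k$ in $J$. In that case $i$ is selected in the run on $J$, and we are done. Otherwise $k$ is still selected in the run on $J$. The activation date $r_k$ is the same in both runs, since $k$ did not change her interval. All customers other than $i$ have unchanged intervals, so they are removed identically. Customer $i$ was not removed in $I$ (we assumed she remains unserved), so $r_k\notin I_i$, and hence $r_k\notin J_i\subseteq I_i$. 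So $i$ is not removed in $J$ either, and the unserved sets stay equal.

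Continuing, eventually the run on $I$ reaches the step where it selects $i$. At that point $i$ has the minimal $(r,\text{index})$ pair among unserved customers in $I$. Since her right endpoint only weakly decreases in $J$ and the other customers are unchanged, she is still minimal in $J$ and is selected. Hence $i\in P(J)$ and $x_i^{\mathrm G}(J)=F=x_i^{\mathrm G}(I)$. If $i\notin P(I)$, then $x_i^{\mathrm G}(I)=0\le x_i^{\mathrm G}(J)$ holds trivially.

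The main obstacle is this coupling argument: I have to check carefully that $i$'s contraction can only cause her to be selected earlier, and never served by someone else's date that she would otherwise have avoided. This holds precisely because $J_i\subseteq I_i$ and all other intervals are fixed. The induction on steps, with the invariant ``identical unserved sets and identical selections so far, with $i$ unserved,'' handles this cleanly.
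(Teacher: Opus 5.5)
Your proposal is correct and follows essentially the same route as the paper: pairwise disjointness of the selected intervals gives both budget balance and the coalition constraints, and for ICP all dates chosen before $i$'s selection avoid $I_i\supseteq J_i$, so with a weakly smaller right endpoint she is selected at the same step or earlier. Your step-by-step coupling, with the invariant of identical unserved sets, just spells out the induction that the paper compresses into one sentence.
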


\begin{proof}
The intervals of the customers in \(P(I)\) are pairwise disjoint.
Indeed, after an activation at the smallest right endpoint, every
remaining interval must begin strictly after that endpoint. Moreover, the dates selected by the procedure serve all customers. Therefore, the number of selected customers is exactly the minimum number of required dates, that is, $|P(I)|=\tau_I(N)$. Therefore, $\sum\limits_{i\in N}x_i^{\mathrm G}(I) = F|P(I)| = F\tau_I(N) = c_I(N)$. For every coalition $S\subseteq N$, the customers in $P(I)\cap S$ have pairwise-disjoint intervals. Hence, each of these customers requires a distinct service date. Since all of them belong to $S$, serving coalition $S$ requires at least $|P(I)\cap S|$ dates. Therefore, $|P(I)\cap S|\leq\tau_I(S)$, which implies $\sum\limits_{i\in S}x_i^{\mathrm G}(I) = F|P(I)\cap S| \leq F\tau_I(S) = c_I(S)$.
Thus, the greedy rule is core-selecting.

Next, we show that the greedy rule satisfies ICP. Consider a unilateral
contraction from $I$ to $J$ with $D(I,J)=\{k\}$. If customer $k$ pays zero
at $I$, her payment cannot decrease. If customer $k$ pays $F$, then she is selected at $I$. Every date chosen before her selection lies
outside $I_k$, and hence outside $J_k$. Since all other intervals are
unchanged  and her right endpoint cannot increase, customer $k$ is selected at $J$ at the same step as at $I$ or earlier. Thus, customer $k$ is also selected at $J$, and must pay $F$, which completes the proof.
\end{proof}

Note that,  core selection and ICP do not uniquely determine the greedy rule. To see this, consider the rule $x^{\mathrm L}$ that repeatedly selects an unserved customer with the largest left endpoint and activates
service at that date. All customers whose intervals contain the
selected date are then served and removed. As before, selected
customers pay $F$, all other customers pay zero, and ties are broken
by the smallest customer index. This rule is obtained by reflecting
the time axis in the right-endpoint greedy procedure. Hence,
Theorem~\ref{thm:greedy-icp} implies that it is also core-selecting
and ICP.

Furthermore, for any fixed $\lambda\in[0,1]$, define
\[
x^\lambda(I)=\lambda x^{\mathrm G}(I)+(1-\lambda)x^{\mathrm L}(I).
\]
The rule $x^\lambda$ is core-selecting because a weighted average
of two core allocations remains in the core. It also satisfies ICP,
since averaging the corresponding ICP inequalities preserves the
inequality. 

We next examine whether core selection remains compatible with the stronger coalitional requirements of SCCP and CCP.

\subsection{Population boundaries}

\subsubsection{Possibility for \texorpdfstring{$n\leq3$}{n <= 3}}


The minimum number of dates required to serve a coalition can equivalently be determined from its mutually incompatible customers. In particular, $\tau_I(S)$ is the largest number of customers in $S$ whose feasible intervals are pairwise disjoint. Indeed, customers with pairwise-disjoint intervals require different service dates, and, for intervals, this lower bound can always be attained. 


\paragraph{The geometry-aware rule for three customers.}


For three customers, the geometric rule $x^{geo}$ is defined as follows:
\begin{enumerate}
\item If all three customers can be served on the same date, the activation cost is divided equally among them.

\item If exactly one pair of customers cannot be served together, each customer in that pair pays \(F\), while the remaining customer pays nothing.

\item If one customer cannot be served with either of the other two, while those two can be served together, the first customer pays \(F\). The remaining \(F\) is charged to whichever of the other two customers is farther from that customer's interval. If their distances are equal, they divide the remaining cost equally.

\item If no two customers can be served together, every customer pays \(F\).
\end{enumerate}

\paragraph{Monotonicity and the three-customer result.}

The next two lemmas isolate the monotonicity needed for SCCP. Their proofs use only the four cases above.

\begin{lemma}[Own-share monotonicity]\label{lem:own}
Let $n=3$, and suppose that only customer $i$ contracts its interval. Then
\[
x_i^{\mathrm{geo}}(J)\geq x_i^{\mathrm{geo}}(I).
\]
\end{lemma}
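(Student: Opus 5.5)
The plan is to argue directly from the four defining cases of $x^{\mathrm{geo}}$. Two structural facts drive the proof, and I will record them first.

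First, because only customer $i$ contracts and $J_i\subseteq I_i$, every pair $\{j,k\}$ not containing $i$ has the same compatibility status at $J$ as at $I$. Any pair $\{i,j\}$ that is disjoint at $I$ stays disjoint at $J$. In short, contraction can only turn compatible pairs involving $i$ into incompatible ones.

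Second, I measure the distance between two intervals as the length of the gap between them, with value $0$ if they intersect. Since $J_i\subseteq I_i$, the distance from $J_i$ to any fixed interval $I_j$ is weakly larger than the distance from $I_i$ to $I_j$.

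I also note that, by the Helly property of intervals, three pairwise-intersecting intervals share a common point. Hence the four cases of the rule are exhaustive and are determined solely by the incompatibility graph on three vertices: $0$, $1$, $2$ or $3$ disjoint pairs.

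I will then split according to $x_i^{\mathrm{geo}}(I)\in\{0,F/3,F/2,F\}$. The case $x_i^{\mathrm{geo}}(I)=0$ is trivial.

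If $x_i^{\mathrm{geo}}(I)=F/3$, we are in case 1, so $j$ and $k$ overlap, and they still overlap at $J$. Hence at $J$ the only possible configurations are the following:
\begin{itemize}
\item case 1, where $i$ pays $F/3$;
\item case 2 with the disjoint pair containing $i$, where $i$ pays $F$;
\item case 3 with $i$ the isolated customer, where $i$ pays $F$.
\end{itemize}
Case 3 with another customer isolated, and case 4, both require $j$ and $k$ to be disjoint, which is impossible.

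If $x_i^{\mathrm{geo}}(I)=F$, there are three sub-cases.
\begin{itemize}
\item \emph{Case 2 with $i$ in the disjoint pair $\{i,j\}$.} Then $\{j,k\}$ remains compatible at $J$. So at $J$ either the same case 2 holds, or case 3 holds with $i$ isolated. In both, $i$ pays $F$.
\item \emph{Case 3 with $i$ isolated, or case 4.} Disjointness is preserved, so the configuration is unchanged and $i$ still pays $F$.
\item \emph{Case 3 with some $j$ isolated and $i$ strictly farther from $I_j$ than $k$ is.} If $J_i$ still meets $I_k$, the configuration is still case 3 with $j$ isolated. By the distance monotonicity, $i$ remains strictly farther from $I_j$, and pays $F$. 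If $J_i$ no longer meets $I_k$, we are in case 4 and $i$ pays $F$.
\end{itemize}

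Finally, if $x_i^{\mathrm{geo}}(I)=F/2$, this is the tie in case 3. The same two alternatives apply: either case 3 persists with distance weakly increased, giving a payment of $F/2$ or $F$, or the configuration moves to case 4, giving $F$.

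I expect the only delicate step to be the case 3 sub-case in which $i$ is not the isolated customer. This is where the distance-based tie-breaking enters, and where monotonicity of the gap under contraction must be stated carefully, including the tie case. Everything else is bookkeeping on which pairs can become disjoint.
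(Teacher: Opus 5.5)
Your proposal is correct and follows essentially the same route as the paper: you split on $x_i^{\mathrm{geo}}(I)\in\{0,F/3,F/2,F\}$ and use two facts, that contraction preserves existing disjointness and that it can only weakly increase the gap to a fixed interval. Your explicit appeal to the Helly property and your listing of the possible configurations at $J$ spell out steps the paper leaves implicit, for example that losing the common point forces $J_i$ to miss one of the other two intervals.
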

\begin{proof}
Under $x^{\mathrm{geo}}$, the payment of customer $i$ is one of the values of $\{0,F/3,F/2, F\}$. If customer $i$ pays $F/3$ at profile $I$, the three reported intervals have a common point. If they still have a common point after the contraction, then customer $i$ will still pay $F/3$. Otherwise, $J_i$ is disjoint from at least one other interval, and customer $i$ must pay $F$.

Suppose customer $i$ pays $F/2$. Then one of the other customers, say $k$, is disjoint from both $i$ and $j$, while $I_i$ and $I_j$ intersect and are equally far from $I_k$. If $J_i\cap I_j=\varnothing$, the three intervals are pairwise disjoint, so customer $i$ pays $F$. Otherwise, contracting $I_i$ cannot move it closer to $I_k$, while $I_j$ and $I_k$ remain unchanged. Thus, $J_i$ is at least as far from $I_k$ as $I_j$, and customer $i$ pays either $F/2$ or $F$.

Finally, suppose customer $i$ pays $F$. If customer $i$ belongs to the unique disjoint pair, or is disjoint from both other intervals, or the three intervals are pairwise disjoint, then $J_i\subseteq I_i$ implies that customer $i$ must pay $F$. Otherwise, customer $i$ is the farther of two intersecting intervals from an interval disjoint from both. Contracting $I_i$ cannot reduce this distance; if the two intervals become disjoint, all three intervals are pairwise disjoint. Thus, customer $i$ must pay $F$. Therefore, $x_i^{\mathrm{geo}}(J)\geq x_i^{\mathrm{geo}}(I)$. 
\end{proof}

\begin{lemma}[Fixed-outsider monotonicity]\label{lem:outsider}
Let $n=3$. Suppose customer $i$ leaves its interval unchanged, while the other two customers may contract their intervals. If $\tau_J(N)=\tau_I(N)$, then 
\[
x_i^{\mathrm{geo}}(J)\leq x_i^{\mathrm{geo}}(I).
\]
\end{lemma}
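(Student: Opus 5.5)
The argument is a case analysis on the common value $t=\tau_I(N)=\tau_J(N)\in\{1,2,3\}$. Customer $i$'s interval is fixed, while the other two customers $j,k$ may contract. The only facts used are that contraction can destroy intersections but never create them, and that it can only increase the distance between intervals. We show that in each case $i$'s payment either stays the same or falls.

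\emph{Case $t=1$.} All three intervals of $I$ share a point. Since $\tau_J(N)=1$, the same holds for $J$. Then $x^{\mathrm{geo}}_i(J)=F/3=x^{\mathrm{geo}}_i(I)$.

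\emph{Case $t=3$.} At both profiles the intervals are pairwise disjoint. So $x^{\mathrm{geo}}_i(J)=F=x^{\mathrm{geo}}_i(I)$.

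\emph{Case $t=2$.} This is where the work lies. The disjointness graph on three vertices is nonempty and not complete, so it is either a single edge (case 2 of the rule) or a path of two edges (case 3). Contraction only adds disjointness edges. We split by $i$'s role at $I$.

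\begin{itemize}
\item[(a)] $i$ pays $F$ at $I$. The claim $x^{\mathrm{geo}}_i(J)\le F$ holds trivially, because every payment under the rule is at most $F$.
\item[(b)] $i$ pays $0$ at $I$ (the graph at $I$ is a single edge $\{j,k\}$). The pair $j,k$ stays disjoint at $J$. If no new edge appears, $i$ still pays $0$. If one edge appears, say $\{i,j\}$, then the graph becomes the path $k$--$j$--$i$ at $J$ with center $j$. The center pays $F$, and the remaining $F$ goes to whichever of $i,k$ is farther from $J_j$. So $i$ could pay up to $F$, apparently a violation.
\item[(c)] $i$ pays $F/2$ or $0$ at $I$ as a non-center in the path case. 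The same kind of change in the graph structure must be tracked.
\end{itemize}

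Subcase (b) is the main obstacle, and it needs a careful reading of rule 3. The phrase ``the first customer pays $F$; the remaining $F$ is charged to whichever of the other two is farther'' refers to the two customers that intersect each other, and here those are $i$ and $k$. Relative to $J_j$, $i$ could indeed be the farther one, so a genuine counterexample seems possible. My plan is therefore to first test configurations of this type, for instance $I_j,I_k$ disjoint with $I_i$ meeting both, and $j$ contracting away from $i$.

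If such a configuration yields $x^{\mathrm{geo}}_i(J)>0=x^{\mathrm{geo}}_i(I)$, the lemma as stated would need the hypothesis $\tau_J(N)=\tau_I(N)$ to exclude it. But in this configuration the minimum number of dates stays two (the path $k$--$j$--$i$ has a maximum disjoint set of size two). I would then check whether the intended meaning in rule 3 is ``the two intersecting customers'', and whether $i$'s interval, which still meets $J_k$, is necessarily nearer to $J_j$ than $J_k$ is. That distance comparison, using that $I_i$ lies between the contracted intervals, is the step I expect to decide the lemma.

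In subcase (c), with $i$ a non-center at $I$, the center stays disjoint from both others and contraction only moves the other non-center farther from the center. So $i$'s share can only weakly drop. This follows by mirroring the distance argument of Lemma~\ref{lem:own}.
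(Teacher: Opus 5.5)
Your case split is the paper's: $t=1$ and $t=3$ are trivial, and for $t=2$ you separate the single-disjoint-pair and path configurations and dismiss the subcase where $i$ pays $F$. But the proposal does not prove the lemma. In subcase (b), which you correctly identify as the crux and read correctly under rule 3, you stop at ``a genuine counterexample seems possible'' and leave the decisive distance comparison untested. That comparison is the whole content of the case, and it always favours $i$.

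Since $\tau_J(N)=2$, $J_j$ and $J_k$ cannot both become disjoint from $I_i$; say $J_j\cap I_i=\varnothing$ and $J_k\cap I_i\neq\varnothing$. If $I_j$ lies to the left of $I_k$, then $r_j'\leq r_j<\ell_k\leq r_i$ forces $J_j$ to lie to the left of $I_i$. Moreover, $\ell_i\leq r_j<\ell_k\leq\ell_k'$ gives
\[
\ell_i-r_j'<\ell_k'-r_j',
\]
so $J_k$ is strictly farther from $J_j$ than $I_i$ is. If $I_k$ lies to the left of $I_j$, the mirror inequalities $r_k'\leq r_k<\ell_j\leq r_i$ give the same conclusion. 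Hence $k$ pays $F$ and $i$ still pays $0$.

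The idea you were missing is this: because $I_i$ meets both of the disjoint intervals $I_j$ and $I_k$, the endpoint of $I_i$ facing $J_j$ lies on $J_j$'s side of the original gap between $I_j$ and $I_k$. By contrast, $J_k$ lies on the other side of that gap, and contracting $k$ only pushes it farther away. In particular, the configuration you propose to test ($I_i$ meeting both, $j$ contracting away from $i$) can never produce a violation.

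Subcase (c) is correct in substance, but the appeal to Lemma~\ref{lem:own} does not transfer verbatim. There the isolated interval is fixed, whereas here the isolated customer $k$ may also contract, so saying that $j$ moves ``farther from the center'' is not enough. You should add two facts, which is how the paper handles this case:
\begin{enumerate}
\item $I_i\cap J_j\neq\varnothing$ is forced by $\tau_J(N)=2$, since otherwise all three intervals would be pairwise disjoint.
\item $I_i$ and $J_j$ lie on the same side of $J_k$, so comparing their distances from $J_k$ reduces to comparing $\ell_i$ with $\ell_j'\geq\ell_j$ (or $r_i$ with $r_j'\leq r_j$). This comparison does not depend on how $k$ contracts.
\end{enumerate}
With these two repairs your argument becomes the paper's proof.
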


\begin{proof}
Let \(j\) and \(k\) denote the other two customers. If $\tau_I(N) = \tau_J(N)=1$, customer $i$ pays $F/3$ at both profiles. If $\tau_I(N) = \tau_J(N)=3$, customer \(i\) pays \(F\) at both profiles. Hence, in both cases $x_i^{\mathrm{geo}}(J)\leq x_i^{\mathrm{geo}}(I)$. Let $\tau_I(N)=\tau_J(N)=2$.  
For three intervals, there are two possible cases.
\begin{enumerate}
    \item Exactly one pair of intervals is disjoint. If customer $i$ belongs to the disjoint pair, then $x_i^{\mathrm{geo}}(I)=F$, so the claim is immediate. Now suppose, the 
disjoint pair consists of customers $j$ and $k$, that is $I_j\cap I_k=\varnothing$, and $I_i\cap I_j\neq\varnothing$, and $I_i\cap I_k\neq\varnothing$, with \(x_i^{\mathrm{geo}}(I)=0\).

If \(J_j\cap I_i\neq\varnothing\) and
\(J_k\cap I_i\neq\varnothing\), then \(\{j,k\}\) is the only disjoint pair at \(J\), and customer \(i\) pays zero. The two reported intervals cannot both be disjoint from \(I_i\), because that would imply $\tau_J(N)=3$. Suppose, without loss of generality, that $J_j\cap I_i=\varnothing$. If $I_j$ lies to the left of $I_k$, then $\ell_i\leq r_j<\ell_k\leq\ell_k'$, therefore $J_k$ is farther from $J_j$ than $I_i$. If $I_k$ lies to the left of $I_j$, then $r_k'\leq r_k<\ell_j\leq r_i$, and again $J_k$ is farther from $J_j$ than $I_i$. Therefore, customer $i$ pays zero at $J$. 

    \item Exactly two pairs of intervals are disjoint. If $I_i$ is disjoint from both other intervals, then $x_i^{\mathrm{geo}}(I)=F$, so the claim is immediate. Now suppose that
$I_k$ is disjoint from both $I_i$ and $I_j$, while $I_i\cap I_j\neq\varnothing$. Since $\tau_J(N)=2$, we must have $I_i\cap J_j\neq\varnothing$. If $I_k$ lies to the left of $I_i$  and $I_j$, then $J_k$ also lies to their left. The farther of $I_i$ and $J_j$ from $J_k$ is  determined by their left endpoints. Since $\ell_i$ is unchanged and $\ell_j'\geq\ell_j$, customer $i$'s payment cannot increase.  If $I_k$ lies to the right of $I_i$ and $I_j$, then $J_k$ also lies to their right. The farther of $I_i$ and $J_j$ from $J_k$ is determined by their right endpoints. Since $r_i$ is unchanged and $r_j'\leq r_j$, customer $i$'s payment cannot increase.
\end{enumerate}
\end{proof}

\begin{theorem}[Three-customer positive result]\label{thm:three}
Let $n=3$. The rule $x^{\mathrm{geo}}$ is core-selecting and SCCP on the two-sided domain. Therefore, it is also CCP.
\end{theorem}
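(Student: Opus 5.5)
Core selection can be checked case by case from the four defining cases of $x^{\mathrm{geo}}$, using that $\tau_I(S)$ equals the largest number of pairwise-disjoint intervals in $S$. In case 1 we have $\tau=1$ for every nonempty coalition, and each singleton pays $F/3\le F$, each pair $2F/3\le F$. In case 2, with $\{j,k\}$ the disjoint pair, any coalition receiving a positive charge contains $j$ or $k$. Its cost is $F$ unless it contains both, in which case its cost is $2F$. The charges respect these bounds. In case 3, with $k$ the isolated customer, $k$ alone pays $F\le c(\{k\})$. Any pair or triple containing $k$ has cost $2F$. The pair $\{i,j\}$ has cost $F$ and pays exactly $F$. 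In case 4, every coalition $S$ has cost $F|S|$. Budget balance holds in each case by construction.

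For SCCP, fix $I$, $J$ and a coalition $C\supseteq D(I,J)$. If $C=N$, the inequality reads $c_J(N)\ge c_I(N)$, which holds because $J_i\subseteq I_i$ means any date set serving $J$ also serves $I$. If $|D(I,J)|=1$ and $C=D=\{i\}$, the claim is Lemma~\ref{lem:own}. The remaining cases have $|C|=2$, or $|C|=1$ after reducing to it. Let $C=N\setminus\{i\}$. Then $i\notin D$ by the assumption $D\subseteq C$, so $i$ leaves her interval unchanged. By budget balance,
\[
\sum_{j\in C}x_j(J)-\sum_{j\in C}x_j(I)=\bigl(c_J(N)-c_I(N)\bigr)-\bigl(x_i(J)-x_i(I)\bigr).
\]
If $\tau_J(N)=\tau_I(N)$, Lemma~\ref{lem:outsider} gives $x_i(J)\le x_i(I)$, so the right-hand side is nonnegative. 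If $\tau_J(N)>\tau_I(N)$, then $c_J(N)-c_I(N)\ge F$. Since payments lie in $[0,F]$, the difference is at least $F-(x_i(J)-x_i(I))\ge 0$ unless $x_i(I)=0$ and $x_i(J)=F$ with the cost rising by exactly $F$. That case has to be excluded or handled directly. It is the main obstacle. My first attempt would be to show that a fixed outsider paying $0$ at $I$, which happens only in cases 2 or 3, cannot jump to paying $F$ when $\tau$ rises by one. If $i$ pays $0$ at $I$ in case 2, then $I_i$ meets both $I_j$ and $I_k$. After contractions, $i$ pays $F$ only if she ends up in a disjoint pair. If $\tau_J=3$ everyone pays $F$, and the coalition pays $2F$ at $J$ against at most $2F$ at $I$, which is fine. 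In case 3 the jump from $\tau=2$ to $3$ likewise gives all three payments equal to $F$, and $C$ pays $2F\ge$ its $I$-payment, which is at most $2F$. So the bad event reduces to $\tau_J=3$, where it is harmless. The remaining case, $\tau_I=1$ to $\tau_J=2$, has $x_i(I)=F/3$, and the difference is at least $F-2F/3>0$.

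The last case is $C$ a singleton $\{i\}$ with $D=\{i\}$, which is Lemma~\ref{lem:own}. When $|D|\ge2$, every admissible $C$ has size at least $2$ and is handled above. CCP follows because it is the special case $C=D(I,J)$.
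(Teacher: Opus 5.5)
Your proof is correct and follows the paper's own argument: a case-by-case core check, then SCCP split by $|C|$, with Lemma~\ref{lem:own} for singleton coalitions, budget balance plus Lemma~\ref{lem:outsider} for two-member coalitions when $\tau_J(N)=\tau_I(N)$, and $\tau_J(N)\geq\tau_I(N)$ for $C=N$. The ``main obstacle'' you flag is not one. SCCP only requires a weak inequality, and $c_J(N)-c_I(N)\geq F\geq x_i(J)-x_i(I)$ already makes the difference nonnegative (exactly zero when $x_i(I)=0$ and $x_i(J)=F$), which is the paper's one-line treatment, so your extra case analysis is valid but redundant.
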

\begin{proof}
We first show that the rule is core-selecting. Considering the four cases shows that $\sum\limits_{i\in N}x_i^{\mathrm{geo}}(I) =F\tau_I(N)$. Moreover, every customer pays at most $F$. Hence, the core constraint holds for every singleton and every pair of customers with disjoint intervals. If two customers have intersecting intervals, the four cases show that their combined payment is at most $F$, which equals their coalition cost. These are all the proper nonempty coalitions when $n=3$. Therefore, $x^{\mathrm{geo}}(I)$ belongs to the core.

We next show that the rule is SCCP.  Consider an admissible contraction from \(I\) to \(J\) and a coalition \(C\)
satisfying \(D(I,J)\subseteq C\). Since contractions cannot decrease the
number of required service dates, $\tau_J(N)\geq\tau_I(N)$. Based on the size of $C$ the following cases are possible:
\begin{itemize}
    \item $|C| = 1$, the result follows from Lemma~\ref{lem:own}.
    \item $|C| = 2$, let $i$ be the customer outside $C$. If $\tau_J(N)=\tau_I(N)$, the total payment is unchanged, and by Lemma~\ref{lem:outsider} customer $i$'s payment cannot increase.  Therefore, the coalition's total payment cannot decrease. If $\tau_J(N)>\tau_I(N)$, the total payment increases by at least $F$, while customer $i$'s payment can increase by at most $F$. Thus, the coalition's total payment cannot decrease.
    \item $|C| = 3$,  the result follows directly from $\tau_J(N)\geq\tau_I(N)$. 
    \end{itemize}
    Therefore, no coalition containing every contracting customer can reduce its total payment, so the rule is SCCP. Therefore, it is also CCP.
\end{proof}

\subsubsection{Impossibility for \texorpdfstring{$n\geq4$}{n >= 4}}

We now establish the four-customer impossibility. We construct two profiles that can both be contracted to the same reported profile. Core-selecting uniquely determines the payments at the two initial profiles, while leaving one payment parameter at the common reported profile. Applying CCP on the first profile forces this parameter to one extreme,  applying it on the second forces the parameter to the opposite extreme.  The construction uses a parameter $\rho$ to control the amount of concealed flexibility. Fix $0<\rho<1/2$ and consider the following three profiles:
\begin{align}
I_\rho^A 
& = \bigl([0,1-\rho], [1-\frac{\rho}{2},2], [2,3], [3,4] \bigr),\label{eq:obstruction-A}\\
J_\rho
& = \bigl([0,1-\rho],[1,2],[2,3],[3+\rho,4]\bigr),\label{eq:obstruction-J}\\
I_\rho^B
& = \bigl([0,1], [1,2], [2,3+\frac{\rho}{2}],[3+\rho,4]\bigr) \label{eq:obstruction-B}
\end{align}
To get to $J_\rho$ from $I_\rho^A$, customers $2$ and $4$ increase their left endpoints. Therefore, $D(I_\rho^A,J_\rho)=\{2,4\}$. Also, to get to $J_\rho$ from $I_\rho^B$, customers $1$ and $3$ must decrease their right endpoints. Therefore, $D(I_\rho^B,J_\rho)=\{1,3\}.$ Each deviation  involves exactly two contracting customers, and each contracting customer changes only one endpoint. It can be verified that the unique core allocations at $I_\rho^A$ and $I_\rho^B$ are, $x(I_\rho^A)=(F,F,0,F)$  and $x(I_\rho^B)=(F,0,F,F)$. Furthermore, at $J_\rho$, every core allocation is of the form $x(J_\rho) = (F,\alpha F,(1-\alpha)F,F)$, with $0\leq\alpha\leq1$. These core allocations produce the four-customer impossibility.

\begin{theorem}[CCP impossibility]\label{thm:ccp-impossibility}
For every $n\geq4$, no core-selecting CCP rule exists on the  two-sided domain. Therefore, no core-selecting SCCP rule exists on this domain.
\end{theorem}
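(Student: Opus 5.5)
The plan is to treat $n=4$ first using the three profiles \eqref{eq:obstruction-A}--\eqref{eq:obstruction-B}, and then embed this four-customer obstruction into any larger population.

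\textbf{Step 1: core allocations at the three profiles.} At $I_\rho^A$ the intervals of customers $1,2,4$ are pairwise disjoint, since $1-\rho<1-\rho/2$ and $2<3$. Hence $\tau=3$. The coalition $\{2,3\}$ intersects at $2$ and $\{3,4\}$ intersects at $3$, so $c(\{2,3\})=c(\{3,4\})=F$. The pairwise-disjoint customers give $x_1+x_2+x_4\le 3F$. The core also requires $x_1\le F$ and $x_2+x_3\le F$, $x_3+x_4\le F$, together with total $3F$. From $x_1\le F$ we get $x_2+x_3+x_4\ge 2F$. Combined with $x_2+x_3\le F$ and $x_3+x_4\le F$, this gives $x_3\le 0$. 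I will check that the singleton bounds $x_2,x_4\le F$ then force $x_2=x_4=F$, $x_3=0$ and $x_1=F$. Nonnegativity is not needed explicitly: it follows from pair constraints such as $x_1+x_3\le c(\{1,3\})$. The case $I_\rho^B$ is symmetric and gives $(F,0,F,F)$.

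At $J_\rho$, customers $1,2$ are disjoint because $1-\rho<1$, and $3,4$ are disjoint because $3<3+\rho$. Customers $2,3$ intersect at $2$, so $\tau=3$. The bounds $x_1\le F$, $x_4\le F$ and $x_2+x_3\le F$, with total $3F$, force $x_1=x_4=F$ and $x_2+x_3=F$. The remaining core constraints then give $x_2,x_3\ge0$, which yields the family $(F,\alpha F,(1-\alpha)F,F)$.

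\textbf{Step 2: apply CCP twice.} From $I_\rho^A$ to $J_\rho$ the contracting set is $D=\{2,4\}$. CCP requires $\alpha F+F\ge F+F$, so $\alpha\ge1$, hence $\alpha=1$. From $I_\rho^B$ to $J_\rho$ the contracting set is $D=\{1,3\}$. CCP requires $F+(1-\alpha)F\ge 2F$, so $\alpha=0$. These two conclusions contradict each other, so no core-selecting CCP rule exists for $n=4$.

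\textbf{Step 3: extend to $n\ge4$.} This is the main obstacle, because core allocations for larger populations must again be pinned down. I would add customers $5,\dots,n$ with intervals all equal to one common interval, for instance $[10,11]$, which is disjoint from all others. None of these extra customers contracts. For any coalition $S$, the cost then separates as
\[
c(S)=c(S\cap\{1,2,3,4\})+F\cdot\mathbf 1[S\cap\{5,\dots,n\}\neq\varnothing].
\]
The total cost is $c_4(N_4)+F$. By the core constraint on $\{1,\dots,4\}$, the first four customers pay at most $c_4$, and the block of extra customers pays at most $F$. Since the total is exactly $c_4+F$, both bounds hold with equality. The restriction to $\{1,2,3,4\}$ is then a core allocation of the four-customer game, because every sub-coalition constraint is inherited. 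Therefore Step 1's uniqueness and parametrization carry over unchanged, and the contracting sets in Step 2 are the same. The contradiction of Step 2 therefore persists for every $n\ge4$.

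The SCCP claim follows because SCCP implies CCP.
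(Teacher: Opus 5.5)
Your proof is correct and follows the paper's argument: the same three profiles, the same forced core allocations $(F,F,0,F)$, $(F,0,F,F)$ and $(F,\alpha F,(1-\alpha)F,F)$, and the same two CCP inequalities forcing $\alpha=1$ and $\alpha=0$. The only difference is in the padding step for $n>4$. You add copies of one common isolated interval and compare the two block sums, whereas the paper adds pairwise-disjoint intervals and pins each extra customer's payment to $F$. Both give that the restriction to customers $1,\ldots,4$ is a core allocation of the four-customer game.
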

\begin{proof}
Suppose $x$ is a core-selecting CCP rule. For the contraction \(I_\rho^A\to J_\rho\), we have \(D(I_\rho^A,J_\rho)=\{2,4\}\). Therefore, CCP requires $x_2(J_\rho)+x_4(J_\rho) \geq x_2(I_\rho^A)+x_4(I_\rho^A)$. Using the core allocations above, this becomes $(1+\alpha)F\geq2F$, which forces \(\alpha=1\).

For the contraction \(I_\rho^B\to J_\rho\), we have \(D(I_\rho^B,J_\rho)=\{1,3\}\). Therefore, CCP requires $x_1(J_\rho)+x_3(J_\rho) \geq x_1(I_\rho^B)+x_3(I_\rho^B)$. This becomes $(2-\alpha)F\geq2F$,  which forces $\alpha=0$. This contradicts \(\alpha=1\), proving the result 
for \(n=4\).

For $n>4$, append the same fixed interval to all three profiles for every additional customer $h\in\{5,\ldots,n\}$:
\[I_{\rho,h}^A = I_{\rho,h}^B = J_{\rho,h} = [2h,2h+1].
\]
These additional intervals are pairwise disjoint and are disjoint from the intervals of customers \(1,\ldots,4\). For every resulting profile \(P\), 
\[
\tau_P(N)
=
\tau_P(\{1,2,3,4\})+(n-4).
\]
Moreover, for every \(h\in\{5,\ldots,n\}\), we have $\tau_P(N\setminus\{h\})=\tau_P(N)-1$.
The total-payment condition and the core constraint for
\(N\setminus\{h\}\)  imply $x_h(P)\geq F$.
The singleton core constraint gives \(x_h(P)\leq F\), so every additional
customer pays \(F\). Hence, the payments of customers $\{1,\ldots,4 \}$ sum to \(F\tau_P(\{1,2,3,4\})\) and satisfy all core constraints of the original
four-customer profile. Their payments must  have the same forms
derived above. Since the additional customers do not change their reports,
the two CCP inequalities again force \(\alpha=1\) and \(\alpha=0\). Since SCCP
implies CCP, the impossibility also holds under SCCP. Both deviations are admissible on the two-sided domain; CCP must hold for these deviations even though each deviating customer in the construction contracts only one endpoint.
\end{proof}


Combining the three-customer construction with the four-customer impossibility gives the exact population boundary on the two-sided domain.

\begin{theorem}[Exact population boundary]\label{thm:population-boundary}

On the  two-sided domain, a core-selecting
 SCCP rule exists if and only if \(n\leq3\). The same boundary holds for
core-selecting CCP rules.
\end{theorem}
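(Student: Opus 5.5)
The plan is to combine Theorem~\ref{thm:three} and Theorem~\ref{thm:ccp-impossibility}, using the implication SCCP $\Rightarrow$ CCP to pass between the two requirements. For the ``if'' direction, the case $n=3$ is exactly Theorem~\ref{thm:three}: $x^{\mathrm{geo}}$ is core-selecting and SCCP, hence CCP. The cases $n=1$ and $n=2$ are not covered there, so I would exhibit rules for them directly. For $n=1$, let the single customer pay $F$; this is the only core allocation. Since $\tau$ equals one at every profile, the payment never changes, and SCCP holds trivially.

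For $n=2$, I would use equal sharing when the two intervals intersect and charge $(F,F)$ when they are disjoint. Core selection is immediate in both cases. For SCCP, take any coalition $C$ containing $D(I,J)$.
\begin{itemize}
\item If $C=N$, the total payment is $F\tau_J(N)\geq F\tau_I(N)$, because contraction cannot decrease $\tau$.
\item If $C=\{i\}$, only customer $i$ contracts. Her payment is $F/2$ or $F$ at $I$. If it is $F$, the intervals are disjoint, and they remain disjoint after contraction, so she still pays $F$. If it is $F/2$, she pays $F/2$ or $F$ afterward.
\end{itemize}
So SCCP holds for $n=2$. A cleaner alternative is to embed small $n$ into the three-customer rule via a dummy customer, but the direct check is short enough that I would just do it.

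For the ``only if'' direction, Theorem~\ref{thm:ccp-impossibility} states that no core-selecting CCP rule exists for $n\geq4$. Since SCCP implies CCP, no core-selecting SCCP rule exists for $n\geq4$ either. This gives the boundary for SCCP.

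For CCP, existence for $n\leq3$ follows from the rules above, since SCCP implies CCP. Nonexistence for $n\geq4$ is again Theorem~\ref{thm:ccp-impossibility}.

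I do not expect a real obstacle here. The only point needing care is that the earlier results explicitly address only $n=3$ and $n\geq4$, so the small cases $n\in\{1,2\}$ must be handled separately. I do this with the explicit rules and the short case checks above.
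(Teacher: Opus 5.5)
Your proposal is correct and follows essentially the same route as the paper: the same explicit rules for $n=1,2$, Theorem~\ref{thm:three} for $n=3$, and Theorem~\ref{thm:ccp-impossibility} together with SCCP $\Rightarrow$ CCP for $n\geq4$. The only difference is the $n=2$ check. The paper notes that any contraction leaves each payment unchanged or raises it from $F/2$ to $F$, which gives SCCP for every coalition at once. Your coalition-by-coalition verification reaches the same conclusion.
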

\begin{proof}
For \(n=1\), charge the sole customer \(F\). For \(n=2\), charge each
customer \(F/2\) when the intervals intersect and \(F\) when they are
disjoint. A contraction either leaves these payments unchanged or increases
them from \(F/2\) to \(F\), so the rule is core-selecting and SCCP.
Theorem~\ref{thm:three} establishes existence for \(n=3\). For \(n\geq4\), Theorem~\ref{thm:ccp-impossibility} rules out core-selecting CCP
rules and therefore also core-selecting SCCP rules.
\end{proof}

\section{One-Sided Domain with Endpoint Verification}\label{sec:onesided}

Theorem~\ref{thm:population-boundary} shows that arbitrary subinterval reports
on the  two-sided domain make core selection incompatible with CCP, and
hence with SCCP, once $n\geq4$. This section examines how the conclusion changes
when one endpoint is verified. We first consider common verified endpoints, then establish possibility for up to five customers and impossibility from six customers onward, yielding the exact population boundaries. The four-customer construction in
Theorem~\ref{thm:ccp-impossibility} relies on two deviations with opposing primary
directions and therefore does not apply to a uniformly one-sided contraction
domain.


Throughout this section, we consider the left-only domain. Each customer's right endpoint is verified, so every admissible contraction is of the form $I_i=[\ell_i,r_i]\longrightarrow J_i=[\ell_i',r_i]$, with $\ell_i\leq\ell_i'\leq r_i$.  The right-only domain is symmetric, that is, the reflection $[\ell_i,r_i]\longmapsto[-r_i,-\ell_i]$ transforms every right-only contraction into a left-only contraction. Therefore, it is sufficient to state and prove the results for the left-only domain.

\subsection{Common verified endpoints}\label{subsec:common-endpoints}

We first consider a structured domain in which all customers share the same verified right endpoint. This captures settings in which customers have a common certified deadline but may conceal how early they are available.

\begin{proposition}[Common verified right endpoint]
Suppose $r_i=R$ for every $i\in N$. Then, for every $n$, the equal-split rule 
$$x_i(I)=\frac{F}{n},\qquad i\in N,$$
is core-selecting and SCCP on the left-only domain.
\end{proposition}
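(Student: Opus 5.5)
The plan rests on one observation: when every interval ends at the same verified right endpoint $R$, every coalition can be served on the single date $R$, both at the true profile and at every admissible report. We first check that this remains true on the left-only domain. Any admissible report has the form $J_i=[\ell_i',R]$ with $\ell_i\leq\ell_i'\leq R$, so it is a nonempty interval that still contains $R$. Hence $J$ again has common right endpoint $R$, and the whole argument can be carried out at an arbitrary profile with common right endpoint. This closure property is why the proof applies uniformly to true and reported profiles.

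\emph{Core selection.} Fix a profile $I$ with $r_i=R$ for all $i$. For every nonempty $S\subseteq N$, the date $R$ lies in $I_i$ for every $i\in S$, so $\tau_I(S)=1$ and $c_I(S)=F$. The equal split collects $\sum_{i\in N}F/n=F=c_I(N)$. For any nonempty $S$, the coalition pays $|S|F/n\leq F=c_I(S)$, and the empty coalition is trivial. Therefore $x(I)$ lies in the core.

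\emph{SCCP.} Take any admissible left-only contraction from $I$ to $J$ and any coalition $C$ with $D(I,J)\subseteq C$. The rule charges $F/n$ to every customer at every profile, so $x(J)=x(I)$ and
\[
\sum_{i\in C}x_i(J)=\frac{|C|F}{n}=\sum_{i\in C}x_i(I).
\]
The SCCP inequality \eqref{eq:sccp} therefore holds with equality. By the implications noted in Section~\ref{sec:model}, CCP and ICP also hold.

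There is no real obstacle here. The only point needing care is that admissible reports preserve the common endpoint and stay nonempty, which is exactly what the left-only domain guarantees. By the reflection argument stated at the start of this section, the same result holds for a common verified left endpoint on the right-only domain.
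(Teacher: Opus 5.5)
Your proof is correct and follows the same route as the paper: the common date $R$ gives $\tau_I(S)=1$ for every nonempty coalition, so the equal split lies in the core, and SCCP holds with equality because the rule ignores the reports. You also check explicitly that left-only reports keep the common endpoint $R$, so core selection is only needed on the same class of profiles; the paper leaves this implicit.
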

\begin{proof}
Since $R\in I_i$ for every $i\in N$, all members of any nonempty coalition $S\subseteq N$ can be served on the single date $R$. Therefore, $\tau_I(S)=1$ and $c_I(S)=F$ for every nonempty $S\subseteq N$. The equal-split rule satisfies $\sum\limits_{i\in N}x_i(I)=c_I(N)$. Moreover, for every nonempty coalition  $S\subseteq N$, we have $\sum\limits_{i\in S}x_i(I)=\frac{|S|F}{n} \leq F =c_I(S)$. Thus, the rule is core-selecting.

Because the rule is independent of the reported intervals, every coalition's total payment remains unchanged under any admissible contraction. Hence, the rule is SCCP and therefore also CCP.
\end{proof}

\subsection{Population boundaries}

\subsubsection{Possibility for \texorpdfstring{$n\leq 5$}{n <= 5}}
\label{subsec:one-sided-positive}

We first construct a core-selecting SCCP rule for $n\leq 5$.

Consider the left-only domain. Since right endpoints are verified, order the customers, using a fixed public rule to break ties, so that $r_1\leq r_2\leq\cdots\leq r_n$.  Fix $I\in\mathcal{I}_n$ and let $m=\tau_I(N)$. The extremal-endpoint rule selects $m$ customers as follows:
\begin{enumerate}
    \item Select customer $1$.
    \item If $m\geq 2$, repeat the following step $m-2$ times: after selecting customer $k$, select the smallest-index customer $j$ satisfying $\ell_j>r_k$.
    \item After these selections, select, among all customers $j$ satisfying $\ell_j>r_k$, one with the largest left endpoint $\ell_j$, breaking ties by the smallest index.
    \item Every selected customer pays $F$, and every other customer pays $0$.
\end{enumerate}
If $m=1$, the procedure stops after the first step.

\begin{theorem}[One-sided possibility for $n\leq5$]
\label{thm:one-sided-positive}
For every $n\leq5$, the extremal-endpoint rule is core-selecting and
SCCP on the left-only domain. Therefore, it is also CCP.
The symmetric result holds on the right-only domain.
\end{theorem}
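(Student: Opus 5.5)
We split the argument into core selection and SCCP, and handle $n\le5$ by exploiting the rigid structure of left-only contractions.

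\textbf{Core selection.} The first $m-1$ selections coincide with the right-endpoint greedy procedure of Theorem~\ref{thm:greedy-icp}. After customers are ordered by right endpoint, the greedy step from $k$ picks the smallest-index $j$ with $\ell_j>r_k$. The last step deviates from greedy only in which customer is charged: it picks the one with largest $\ell_j$ among those with $\ell_j>r_k$. Because $\tau_I(N)=m$, activating at the $m-1$ greedy right endpoints and then at $\min_{j:\ell_j>r_k} r_j$ serves everyone. A customer with the largest $\ell_j$ still satisfies $\ell_j\le\min r$, so it contains that last date. Hence the selected customers are pairwise disjoint, there are $m$ of them, and the payments sum to $c_I(N)$. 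The argument for $|P\cap S|\le\tau_I(S)$ from Theorem~\ref{thm:greedy-icp} then gives the core property verbatim.

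\textbf{SCCP.} Fix a left-only contraction $I\to J$ and a coalition $C\supseteq D(I,J)$. Since right endpoints, and hence the ordering, are unchanged and $\tau_J(N)\ge\tau_I(N)$, we reduce SCCP to a statement about the noncontracting customers $N\setminus C$. We must show that their total payment rises by at most $F(\tau_J(N)-\tau_I(N))$. Equivalently, the number of unchanged customers who become selected, minus those who stop being selected, is at most $\tau_J(N)-\tau_I(N)$.

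We prove this with two structural lemmas mirroring Lemmas~\ref{lem:own} and~\ref{lem:outsider}.
\begin{enumerate}[label=(\roman*)]
\item \emph{Own-share monotonicity.} A selected contracting customer remains selected. For a greedy step, raising $\ell_i$ keeps $i$ eligible, and the indices and right endpoints are unchanged. For the last step, raising $\ell_i$ only increases her left endpoint. We must also check that the greedy chain up to her is unchanged or that she enters earlier.
\item \emph{Outsider bound.} When $\tau$ is unchanged, an unchanged customer can lose selection but cannot gain it. The greedy chain can only shift to larger indices, and the last-step maximizer among unchanged customers cannot be displaced in favour of another unchanged customer.
\end{enumerate}
When $\tau$ increases by $d$, the chain acquires $d$ new links, each able to newly charge at most one unchanged customer.

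\textbf{Main obstacle and where $n\le5$ enters.} The crux is the case $\tau_J>\tau_I$, where the greedy chain is reshuffled. An unchanged customer could newly become a chain member while another unchanged customer becomes the last-step maximizer, producing a net gain exceeding $d$. Since the six-customer impossibility shows this must eventually fail, the bound $n\le5$ is what rules this out.

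We plan a counting argument. Each newly charged unchanged customer must be disjoint from the other selected customers at $J$. Producing such a configuration beyond the allowed count requires enough contracting customers to create the extra disjointness, together with enough unchanged customers both to be charged and to witness the original $\tau_I$. We would tally these required roles and show that they need at least six customers. If the tally is not clean, the fallback is a finite case analysis on $m=\tau_I(N)\in\{1,\dots,5\}$ and $|C|$, as in Theorem~\ref{thm:three}. The cases $m=1$ and $m=n$ are trivial, and the coalitions $|C|=n$ and $|C|=n-1$ are immediate. This leaves a manageable number of configurations with $|N\setminus C|\ge2$. The right-only statement then follows by the reflection $[\ell_i,r_i]\mapsto[-r_i,-\ell_i]$.
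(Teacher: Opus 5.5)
Your core-selection argument is fine, but the SCCP part is not carried out, and it misplaces where $n\le5$ is needed. Your outsider lemma (ii) rests on a monotonicity that runs backwards. Under left-only contractions every eligibility set $\{j:\ell_j>r_k\}$ can only grow, so the chain shifts to \emph{smaller} indices and smaller right endpoints. That is exactly what lets an unchanged customer become eligible further down the chain, so without the population bound (ii) fails even when $\tau$ is unchanged. Take $n=6$, $I=([0,1],[0.5,2],[1.5,3],[2.5,4],[3.5,5],[5.5,6])$, and let customer $2$ alone contract to $[1.5,2]$. Then $\tau_I(N)=\tau_J(N)=4$, the rule selects $\{1,3,5,6\}$ at $I$ and $\{1,2,4,6\}$ at $J$, and the unchanged customer $4$ becomes selected. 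The coalition $C=\{2,3,5\}$ pays $F$ instead of $2F$, and $C=N\setminus\{4\}$ pays $3F$ instead of $4F$, so the case $|C|=n-1$ you call immediate is not. Hence the difficulty is not confined to $\tau_J(N)>\tau_I(N)$. Your claim that each of the $d$ new links charges at most one unchanged customer bounds nothing, since existing links are reshuffled too. Lemma (i) is also false once several customers contract: a selected contracting customer can be displaced by a smaller-index contracting customer that becomes eligible. Your reduction does not need (i), but it does need the right target. The number of unchanged customers who \emph{become} selected must be at most $\tau_J(N)-\tau_I(N)$; a net count is not enough, because $C$ may absorb the unchanged customers who lose selection.

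The missing idea is the paper's bookkeeping. Since payments are $0$ or $F$, SCCP for all $C\supseteq D(I,J)$ is equivalent to this: the number of customers selected at $I$ but not at $J$ is at most the number of newly selected contracting customers.

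\emph{One omitted customer}, handled for every $n$. Suppose the chains first differ with $a$ at $I$ and $c$ at $J$. The predecessor is common and $a$ stays eligible, so $c<a$. Customer $c$ was ineligible at $I$ at that step and every later one, so she is newly selected and contracted. If instead the first $m-1$ selections agree, there are two cases. If $\tau_J(N)=\tau_I(N)$, the new final customer contracted, since otherwise she was eligible at $I$ and lost to $b$, whose left endpoint can only rise. If $\tau_J(N)>\tau_I(N)$, some newly selected customer contracted, since otherwise $I$ would contain more than $m$ pairwise-disjoint intervals.

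\emph{Two or more omitted customers.} Customer $1$ is always selected and $\tau_J(N)\ge\tau_I(N)$, so this forces $n=5$ and the rigid configuration $\{1,a,b\}\to\{1,c,d\}$. Here $c$ contracted by the first-difference argument. Also $\ell_b(J)\ge\ell_b(I)>r_a\ge r_c$ keeps $b$ eligible at the final step, so the selected $d$ must have contracted. With only one chain link after customer $1$, no unchanged customer can slip in downstream, which is precisely what breaks in the six-customer example above.
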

\begin{proof}
At every nonfinal step, the rule selects the eligible customer with the smallest verified right endpoint. This choice does not reduce the number of intervals that can subsequently be selected. Hence, a final customer always exists.

We first show that the rule is core-selecting. Fix $I\in\mathcal{I}_n$ and let $m=\tau_I(N)$. The rule selects $m$ customers whose intervals are pairwise disjoint. Therefore, $\sum\limits_{i\in N}x_i(I)=mF=c_I(N)$. For every $S\subseteq N$, the number of selected customers in $S$ is at most $\tau_I(S)$.  Hence, $\sum\limits_{i\in S}x_i(I)\leq F\tau_I(S)=c_I(S)$.  We next show that the rule is SCCP. Consider a left-only contraction from $I$ to $J$ and a coalition containing every contracting customer. Since $J_i\subseteq I_i$, every set of dates that serves $J$ also serves $I$. Therefore, $\tau_J(N)\geq\tau_I(N)$. Customer $1$ is selected at both profiles. Because every payment is either $0$ or $F$, it is enough to show that the number of newly selected contracting customers is at least the number of customers selected at $I$ but not at $J$.

If no customer selected at $I$ is omitted at $J$, the result is immediate. Suppose exactly one customer is omitted, and let $m=\tau_I(N)$. Compare the first $m-1$ selections at the two profiles. If they differ, consider their first difference. The preceding customer is the same at both profiles; let $a$ be selected next at $I$ and $c$ at $J$. Customer $a$ remains eligible at $J$, so the selection of $c$ implies $c<a$. Customer $c$ was not eligible at $I$, since otherwise the rule would have selected $c$ instead of $a$. Moreover, she cannot become eligible at a later step at $I$, because the right endpoints of subsequently selected customers are larger. Thus, customer $c$ contracted and is selected only at $J$.

Now suppose that the first $m-1$ selections are the same. The omitted customer is then the final customer selected at $I$. If $\tau_J(N)=m$, let $b$ and $d$ be the final customers selected at $I$ and $J$. If $d$ had not contracted, she would also have been eligible at $I$. Since $b$ was selected over $d$ at $I$, and $b$'s left endpoint can only increase while $d$'s is unchanged, $d$ cannot be selected over $b$ at $J$. Hence, $d$ contracted.  Now suppose that $\tau_J(N)>m$. The first $m-1$ selected customers are the same at both profiles, while the original final customer is no longer selected. Hence, all remaining customers selected at $J$ are newly selected. If none of them contracted, their intervals would be unchanged. These intervals would then be pairwise disjoint at $I$ and would begin after the verified right endpoint of the $(m-1)$st selected customer. Together with the first $m-1$ selected intervals, they would give more than $m$ pairwise-disjoint intervals at $I$, contradicting $\tau_I(N)=m$. Therefore, at least one newly selected customer must have contracted.
 Thus, one omitted customer is always offset by a newly selected contracting customer.

It remains to consider at least two omitted customers. Since $\tau_J(N)\geq\tau_I(N)$, at least two customers must also be newly selected at $J$. Together with customer $1$, who is selected at both profiles, the assumption $n\leq 5$ implies that $n=5$ and the selected customers are $\{1,a,b\}$ at $I$ and $\{1,c,d\}$  at $J$, where $a$ and $c$ are selected second and $b$ and $d$ are selected last.

Customer $a$ remains eligible after customer $1$ at $J$. Since the rule selects $c$, we have $c<a$. Customer $c$ was not eligible at $I$, and therefore $\ell_c(I)\leq r_1<\ell_c(J)$. Hence, customer $c$ contracted.

Since $c<a$, we have $r_c\leq r_a$. Moreover, $\ell_b(J)\geq\ell_b(I)>r_a\geq r_c$,  so customer $b$ remains eligible for the final selection at $J$. Because the rule selects $d$, $\ell_d(J)\geq\ell_b(J)>r_a$.  If $d$ had not contracted, she would also have been eligible for the final selection at $I$. Customer $b$ was selected over $d$ at $I$, and $b$'s left endpoint can only increase while $d$'s remains unchanged. Under the same tie-breaking rule, $d$ cannot be selected over $b$ at $J$, a contradiction. Thus, $d$ contracted.

Therefore, every customer who stops paying $F$ is offset by a newly paying contracting customer. No coalition containing all contracting customers can reduce its total payment, so the rule is SCCP. The right-only result follows by reflection.
\end{proof}

\subsubsection{Impossibility for \texorpdfstring{$n\geq6$}{n >= 6}}\label{subsec:one-sided-impossibility}

The common-endpoint result in Section~\ref{subsec:common-endpoints} applies for every population size. With different verified right endpoints, however, we now show that no core-selecting CCP rule exists from six customers onward. The construction uses a parameter $\rho$ to control the size of the contractions. Fix $0<\rho<\frac{1}{2}$ and consider
\begin{align*}
I_\rho^A
&=\left([0,1],[1,2],
\left[1+\frac{\rho}{2},3\right],[2+\rho,4],
\left[3+\frac{\rho}{2},5\right],[5+\rho,6]\right),\\
J_\rho
&=\left([0,1],[1+\rho,2],
[1+\rho,3],[2+\rho,4],
[3+\rho,5],[5+\rho,6]\right),\\
I_\rho^B
&=\left([0,1],[1+\rho,2],
[1+\rho,3],\left[2+\frac{\rho}{2},4\right],
[3+\rho,5],[5,6]\right).
\end{align*}

To obtain $J_\rho$ from $I_\rho^A$, customers $2$, $3$, and $5$ increase their left endpoints. To obtain $J_\rho$ from $I_\rho^B$, customers $4$ and $6$ increase their left endpoints. Therefore, $D(I_\rho^A,J_\rho)=\{2,3,5\}$, and $D(I_\rho^B,J_\rho)=\{4,6\}$.

It can be verified that the unique core allocations at $I_\rho^A$ and $I_\rho^B$ are  $x(I_\rho^A)=(F,0,F,0,F,F)$ and $x(I_\rho^B)=(F,F,0,F,0,F)$, respectively. At $J_\rho$, every core allocation satisfies
$$x_1(J_\rho)=x_6(J_\rho)=F,\qquad
x_2(J_\rho)+x_3(J_\rho)=F,\qquad
x_4(J_\rho)+x_5(J_\rho)=F.
$$

\begin{theorem}[One-sided CCP impossibility]
\label{thm:one-sided-ccp}
For every $n\geq 6$, no core-selecting CCP rule exists on the left-only domain. Therefore, no core-selecting SCCP rule exists on this domain. The symmetric conclusions hold on the right-only domain.
\end{theorem}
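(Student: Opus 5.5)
The plan mirrors the proof of Theorem~\ref{thm:ccp-impossibility}, using the three profiles $I_\rho^A$, $J_\rho$, $I_\rho^B$ displayed above. First I will verify the claimed core structure; the rest is arithmetic on two CCP inequalities.

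\emph{Core at $I_\rho^A$.} Intervals $1,3,5,6$ are pairwise disjoint: $[0,1]$, $[1+\rho/2,3]$, $[3+\rho/2,5]$, $[5+\rho,6]$. Since $2$ and $4$ each meet only neighbouring intervals, $\tau(N)=4$, and removing any one of $1,3,5,6$ drops $\tau$ to $3$. Together with the singleton bounds, this gives $x_i\ge F$ for $i\in\{1,3,5,6\}$ via the argument used for the padding customers in Theorem~\ref{thm:ccp-impossibility}: the total-payment condition plus the core constraint for $N\setminus\{i\}$. Hence $x(I_\rho^A)=(F,0,F,0,F,F)$.

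\emph{Core at $I_\rho^B$.} Symmetrically, $1,2,4,6$ are pairwise disjoint, $\tau(N)=4$, and each is essential, so $x(I_\rho^B)=(F,F,0,F,0,F)$.

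\emph{Core at $J_\rho$.} Here $\tau(N)=4$, since the dates $1,2,4,6$ suffice. Each of $1$ and $6$ is essential, which forces $x_1=x_6=F$. The pair-free constraints give $x_2+x_3\le c(\{2,3\})=F$ and $x_4+x_5\le F$. The total is $4F$, so both bounds bind.

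With these payments in hand, suppose $x$ is core-selecting and CCP, and write $x_2(J_\rho)=\alpha F$ and $x_4(J_\rho)=\beta F$.
\begin{itemize}
\item The contraction $I_\rho^A\to J_\rho$ with $D=\{2,3,5\}$ requires $x_2+x_3+x_5\ge 2F$ at $J_\rho$, i.e.\ $F+(1-\beta)F\ge 2F$. Hence $\beta=0$.
\item The contraction $I_\rho^B\to J_\rho$ with $D=\{4,6\}$ requires $\beta F+F\ge 2F$. Hence $\beta=1$.
\end{itemize}
These two conclusions contradict each other. (Interestingly, $\alpha$ cancels in the first inequality, so only the $\{4,5\}$ block matters; I would double-check this against the data, but the contradiction stands either way.)

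For $n>6$, I will append pairwise disjoint far-away intervals $[2h,2h+1]$ for $h\ge7$ to all three profiles, exactly as in Theorem~\ref{thm:ccp-impossibility}. Each added customer is essential, so they all pay $F$, and the restriction to customers $1,\dots,6$ satisfies the same core constraints. The added customers do not contract, so the two CCP inequalities are unchanged. SCCP implies CCP, so the SCCP impossibility follows. The right-only statement follows from the reflection $[\ell,r]\mapsto[-r,-\ell]$, which maps right-only contractions to left-only ones and preserves $\tau$ and the core.

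The main obstacle I expect is the verification of the core structure, particularly at $J_\rho$. One must confirm that $\tau_{J_\rho}(N)=4$ and not $5$: I must check that no five intervals are pairwise disjoint, e.g.\ $2$ and $3$ share $[1+\rho,2]$ and $4,5$ overlap. One must also confirm that the essentiality arguments give exactly $\tau(N\setminus\{i\})=\tau(N)-1$ for the listed customers. I would do this by listing the interval-overlap graph of each profile and computing maximum disjoint families via the greedy by right endpoint, as justified in the proof of Theorem~\ref{thm:greedy-icp}.
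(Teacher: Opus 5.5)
Your proposal is correct and matches the paper's proof: the same two CCP inequalities, with $x_2(J_\rho)+x_3(J_\rho)=F$ making the first force $x_5(J_\rho)=F$ and the second force $x_4(J_\rho)=F$, plus the same padding and reflection. You also supply the core verification that the paper leaves to the reader; the one point to add is that $x_2(I_\rho^A)=x_4(I_\rho^A)=0$ needs $x_2,x_4\ge 0$, which follows from your complement argument because $\tau_{I_\rho^A}(N\setminus\{2\})=\tau_{I_\rho^A}(N\setminus\{4\})=4$.
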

\begin{proof}
Let $n=6$ and $x$ be  a core-selecting CCP rule. For the contraction $I^A_\rho\to J_\rho$, CCP requires $x_2(J_\rho)+x_3(J_\rho)+x_5(J_\rho) \geq x_2(I^A_\rho)+x_3(I^A_\rho)+x_5(I^A_\rho)  =2F$.  Since $x_2(J_\rho)+x_3(J_\rho)=F$, this forces $x_5(J_\rho)=F$.

For the contraction $I^B_\rho\to J_\rho$, CCP requires $x_4(J_\rho)+x_6(J_\rho)
\geq x_4(I^B_\rho)+x_6(I^B_\rho) =2F$. Since $x_6(J_\rho)=F$, this forces $x_4(J_\rho)=F$. Therefore, $x_4(J_\rho)+x_5(J_\rho)=2F$, 
contradicting $x_4(J_\rho)+x_5(J_\rho)=F$.

For $n>6$, append the same fixed interval to all three profiles for every additional customer $h\in\{7,\ldots,n\}$: 
$$
I_{\rho,h}^A =I_{\rho,h}^B=J_{\rho,h} =[2h,2h+1].
$$
These additional intervals are pairwise disjoint and are disjoint from the intervals of customers $\{1,\ldots,6\}$. For every resulting profile $P$, $\tau_P(N)=\tau_P(\{1,\ldots,6\})+(n-6)$. Moreover, $\tau_P(N\setminus\{h\})=\tau_P(N)-1$ for every $h\in\{7,\ldots,n\}$. 
The total-payment condition and the core constraint for $N\setminus\{h\}$ imply $x_h(P)\geq F$, while the singleton core constraint gives $x_h(P)\leq F$. Thus, every additional customer pays $F$. Therefore, the payments of customers $\{1,\ldots,6\}$ satisfy the same core restrictions as in the original six-customer profiles. Since the additional customers do not change their reports, the same two CCP inequalities yield the contradiction. The right-only result follows by reflection.
\end{proof}

Combining the preceding results gives the exact population boundary on the one-sided domains.
\begin{theorem}[Exact population boundaries]
On either the left-only or the right-only  domain, a core-selecting SCCP rule exists if and only if $n\leq 5$. The same boundary holds for core-selecting CCP rules.
\end{theorem}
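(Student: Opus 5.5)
This theorem follows by combining the two preceding results with a small-$n$ check, as was done for Theorem~\ref{thm:population-boundary}. It suffices to argue on the left-only domain. The right-only case then follows from the reflection $[\ell_i,r_i]\mapsto[-r_i,-\ell_i]$, which maps right-only contractions bijectively onto left-only contractions. It also preserves $\tau$ for every coalition, hence the cost game, core membership, and the sets $D(I,J)$. Any rule on one domain can therefore be transported to the other with all properties intact, and likewise any counterexample.

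For the ``if'' direction with $n\leq5$, I will invoke Theorem~\ref{thm:one-sided-positive}. It states that the extremal-endpoint rule is core-selecting and SCCP on the left-only domain for every $n\leq5$. The only points to check are that the rule is well defined for small $n$, including $n=1$ and $n=2$, and for $m=1$, where the procedure stops after selecting customer $1$. The theorem covers these cases, since its statement is for every $n\leq5$.

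Alternatively, for $n\leq3$ one can note that the left-only domain is a subset of the two-sided domain. The requirements are universally quantified over admissible contractions, so shrinking the set of admissible contractions weakens them. Hence the rule $x^{\mathrm{geo}}$ from Theorem~\ref{thm:three} and the rules for $n\leq2$ from Theorem~\ref{thm:population-boundary} remain core-selecting and SCCP there. Because SCCP implies CCP, existence under CCP follows as well.

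For the ``only if'' direction with $n\geq6$, I will invoke Theorem~\ref{thm:one-sided-ccp}, which rules out core-selecting CCP rules. Since SCCP implies CCP, core-selecting SCCP rules are ruled out too. Both boundaries for SCCP and for CCP then coincide at $n\leq5$.

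I do not expect a real obstacle here. The only care needed is to state explicitly the monotonicity of the requirements with respect to the domain and to the SCCP-to-CCP implication. These ensure that one existence result and one impossibility result settle both the SCCP and the CCP version simultaneously.
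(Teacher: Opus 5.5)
Your proposal is correct and follows the paper's proof: Theorem~\ref{thm:one-sided-positive} gives existence for $n\leq5$, and Theorem~\ref{thm:one-sided-ccp} rules out core-selecting CCP rules, and hence SCCP rules, for $n\geq6$. Reflection transfers both conclusions to the right-only domain. Your extra remark that the two-sided rules for $n\leq3$ remain valid on the smaller left-only domain is correct but not needed.
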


\begin{proof}
For $n\leq 5$, Theorem~\ref{thm:one-sided-positive} provides a core-selecting SCCP rule on the
left-only domain, and hence also a core-selecting CCP rule. For $n\geq 6$,
Theorem~\ref{thm:one-sided-ccp} rules out core-selecting CCP rules and hence also core-selecting
SCCP rules. Reflection gives the same conclusions on the right-only domain.
\end{proof}

\section{Robustness}
\label{sec:robustness}

\subsection{Relative concealment and robustness}

For a contraction from profile $I$ to profile $J$, define the fraction of
customer $i$'s original interval that is concealed by
\begin{equation}
q_i(I,J)
=1-\frac{|J_i|}{|I_i|}
=\frac{(\ell'_i-\ell_i)+(r_i-r'_i)}{r_i-\ell_i},
\label{eq:concealment-ratio}
\end{equation}
whenever $|I_i|>0$. Thus, $q_i(I,J)=0$ means that no flexibility is
concealed, while, for example, $q_i(I,J)=0.30$ means that 30 percent of the
original interval is removed. We restrict attention to contractions satisfying
\[
q_i(I,J)\leq \bar q
\qquad (i\in D(I,J)),
\]
where $\bar q\in[0,1]$. Singleton intervals need no concealment ratio because
they cannot be contracted strictly.

For the four-customer construction in Theorem~\ref{thm:ccp-impossibility},
the two contractions satisfy
\[
q_2(I_\rho^A,J_\rho)=\frac{\rho}{2+\rho},
\qquad
q_4(I_\rho^A,J_\rho)=\rho,
\]
and
\[
q_1(I_\rho^B,J_\rho)=\rho,
\qquad
q_3(I_\rho^B,J_\rho)=\frac{\rho}{2+\rho}.
\]
Hence the largest concealment ratio is $\rho$, which converges to zero as $\rho$ approaches zero. Therefore, for every $\bar q>0$, we can choose $\rho>0$
small enough that every contracting customer satisfies
$q_i(I,J)\leq\bar q$. The impossibility in
Theorem~\ref{thm:ccp-impossibility} thus persists even when customers conceal
an arbitrarily small fraction of their flexibility. The contractions remain
fully admissible on the two-sided domain: different customers may contract from different ends of their intervals.

For the six-customer construction, the nonzero concealment ratios are
\[
\begin{aligned}
q_2(I_\rho^A,J_\rho)=q_6(I_\rho^B,J_\rho)&=\rho,\\
q_3(I_\rho^A,J_\rho)=q_5(I_\rho^A,J_\rho)
=q_4(I_\rho^B,J_\rho)&=\frac{\rho}{4-\rho}.
\end{aligned}
\]
Their maximum is again $\rho$, so the one-sided impossibility also
persists for every positive concealment cap.

\subsection{Approximate requirements on the two-sided domain}
\label{sec:approx}

Let $\varepsilon,\delta\geq0$ and write
$x_T(P)=\sum_{i\in T}x_i(P)$ for the total payment of coalition $T$ at
profile $P$.

A rule is $\varepsilon$-core if, for every profile $P$,
\[
x_N(P)=F\tau_P(N)
\]
and
\[
x_S(P)\leq F\tau_P(S)+\varepsilon F
\qquad\text{for every }S\subseteq N.
\]
Thus, $\varepsilon=0$ gives the usual core requirement.

A rule is $\delta$-CCP if, for every admissible contraction $I\to J$,
\[
x_{D(I,J)}(J)
\geq
x_{D(I,J)}(I)-\delta F.
\]
Hence the contracting customers may reduce their combined payment by at most
$\delta F$. A rule is $\delta$-SCCP if the same inequality holds for every
coalition $C\supseteq D(I,J)$, with $C$ replacing $D(I,J)$. Therefore,
$\delta$-SCCP implies $\delta$-CCP.

We will repeatedly use one simple consequence of the $\varepsilon$-core
condition. For any $T\subseteq N$, applying the core inequality to the
complement $N\setminus T$ and using efficiency gives
\begin{equation}
\begin{aligned}
x_T(P)
&=x_N(P)-x_{N\setminus T}(P)\\
&\geq F\tau_P(N)-\bigl(F\tau_P(N\setminus T)+\varepsilon F\bigr)\\
&=\bigl[\tau_P(N)-\tau_P(N\setminus T)-\varepsilon\bigr]F.
\end{aligned}
\label{eq:approx-lower-bound}
\end{equation}
Equation~\eqref{eq:approx-lower-bound} converts the upper-bound form of the
$\varepsilon$-core constraints into a lower bound on the payment of a chosen
group $T$.

\begin{theorem}[Two-sided approximate trade-off]
\label{thm:approx}
Let $n\geq4$ and $\bar q>0$. Every $\varepsilon$-core,
$\delta$-CCP rule on the two-sided domain, even when attention is restricted to contractions satisfying
\[
q_i(I,J)\leq\bar q
\qquad(i\in D(I,J)),
\]
must satisfy
\[
4\varepsilon+2\delta
\geq
\left\lfloor\frac n4\right\rfloor.
\]
The same condition is necessary under $\delta$-SCCP.
\end{theorem}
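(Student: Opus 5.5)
Let $k=\lfloor n/4\rfloor$. We replicate the four-customer obstruction of Theorem~\ref{thm:ccp-impossibility} on $k$ disjoint blocks of customers. Fix $\rho>0$ with $\rho\leq\bar q$. Block $g\in\{1,\ldots,k\}$ consists of customers $4g-3,\ldots,4g$, whose intervals are copies of $I_\rho^A$, $J_\rho$, or $I_\rho^B$ translated by $10g$. Any leftover customers receive fixed singleton intervals far to the right. Define three profiles on $N$: $\hat I^A$, with every block in its $A$-configuration; $\hat J$, with every block in its $J$-configuration; and $\hat I^B$, with every block in its $B$-configuration. Blocks and leftovers are pairwise separated, so $\tau$ is additive over blocks and leftovers for every coalition. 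The contraction $\hat I^A\to\hat J$ is admissible, with $D$ equal to the union of the copies of $\{2,4\}$. The contraction $\hat I^B\to\hat J$ has $D$ equal to the union of the copies of $\{1,3\}$. All concealment ratios are at most $\rho\leq\bar q$, as computed in the relative-concealment subsection.

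Next, apply the lower bound \eqref{eq:approx-lower-bound} with well-chosen sets $T$. At $\hat I^A$ we want a lower bound on the total payment of $D_A=\bigcup_g\{2,4\}_g$. In one block of $I^A_\rho$, $\tau(\{1,2,3,4\})=3$ and $\tau(\{1,3\})=2$, with intervals $[0,1-\rho]$ and $[2,3]$. Additivity then gives $\tau(N\setminus D_A)=\tau(N)-k$, so \eqref{eq:approx-lower-bound} yields $x_{D_A}(\hat I^A)\geq (k-\varepsilon)F$. This is weaker than the $2kF$ the exact core gives, so we refine it. In the block, customers $2$ and $4$ must each carry essentially $F$: removing customer $4$ drops $\tau$ by one, and removing customer $2$ also drops $\tau$ by one, since $\{1,3,4\}$ gives $\tau=2$. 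Taking $T=D_A$ directly, $N\setminus D_A$ has per-block $\tau$ equal to $2$ versus $3$, so the bound is only $k-\varepsilon$. Instead, we write the payment as total minus complement: $x_{D_A}(\hat I^A)=F\tau(N)-x_{N\setminus D_A}(\hat I^A)$. We then bound the complement from above by $\varepsilon$-core constraints applied to the coalitions $\{1\}_g\cup\{3\}_g$ aggregated, that is, a single coalition $S=N\setminus D_A$ with $\tau(S)=2k+\text{(leftovers)}$. This gives $x_{D_A}(\hat I^A)\geq (k-\varepsilon)F$ plus the leftover terms, which cancel. The main obstacle is getting the right constant, so I would choose the sets carefully.

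At $\hat J$ we need upper bounds on $x_{D_A}(\hat J)$ and $x_{D_B}(\hat J)$. Their sum is the total payment of all block customers. Using the lower bound \eqref{eq:approx-lower-bound} on the leftovers, which pay at least $(n-4k-\varepsilon)F$, the block customers pay at most $(3k+\varepsilon)F$. This gives
\[
x_{D_A}(\hat J)+x_{D_B}(\hat J)\leq (3k+\varepsilon)F.
\]
Now add the two $\delta$-CCP inequalities:
\[
x_{D_A}(\hat J)+x_{D_B}(\hat J)\geq x_{D_A}(\hat I^A)+x_{D_B}(\hat I^B)-2\delta F.
\]
To finish we need $x_{D_A}(\hat I^A)+x_{D_B}(\hat I^B)\geq (4k-c\varepsilon)F$ with $c\leq 3$, so that $3k+\varepsilon\geq 4k-3\varepsilon-2\delta$, which gives $4\varepsilon+2\delta\geq k$. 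The crude single-$T$ bound gives only $k-\varepsilon$ per side, which is far too weak. So the key step, and the main obstacle, is a lower bound $x_{D_A}(\hat I^A)\geq (2k-\varepsilon')F$ with a small error. I would try $T=D_A$ together with the identity $x_{D_A}=x_N-x_{\{1\}\text{'s}}-x_{\{3\}\text{'s}}-x_{\text{leftovers}}$. I would apply the $\varepsilon$-core upper bound to the single coalition $\bigcup_g\{1,3\}_g\cup\text{leftovers}$, whose cost is $2k+(n-4k)$, so $x_{D_A}\geq(3k-2k-\varepsilon)F$. That is still $k$. To do better, I would bound the customers $3_g$, who pay $0$ in the exact core, by using the coalition $\bigcup_g\{2,3\}_g$ (cost $k$, since $[1-\rho/2,2]$ and $[2,3]$ meet at $2$) together with $\bigcup_g\{1\}_g$ and $\bigcup_g\{4\}_g$. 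If needed, I would split the two CCP inequalities with one $\varepsilon$ each, aiming for the stated constant $4\varepsilon$. For $\delta$-SCCP, the statement follows since it implies $\delta$-CCP.
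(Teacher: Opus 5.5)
Your replication of the four-customer block and your way of combining the two $\delta$-CCP inequalities are sound, but the decisive lower bound is never obtained, and the target you set for it cannot be met. You observed that deleting customer $2$ alone, or customer $4$ alone, lowers a block's $\tau$ by one. Yet you kept applying \eqref{eq:approx-lower-bound} to $D_A$ (or to its complement) as a single set, which gives only $(k-\varepsilon)F$. Apply it separately instead. Let $G_j$ be the copies of customer $j$. Deleting $G_2$ and deleting $G_4$ each save $k$ dates at $\hat I^A$, so $x_{G_2}(\hat I^A)\ge(k-\varepsilon)F$ and $x_{G_4}(\hat I^A)\ge(k-\varepsilon)F$, hence $x_{D_A}(\hat I^A)\ge(2k-2\varepsilon)F$. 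Likewise $G_1$ and $G_3$ give $x_{D_B}(\hat I^B)\ge(2k-2\varepsilon)F$. This costs $4\varepsilon$, and no bound with $c\le 3$ exists: $x_{D_A}(\hat I^A)+x_{D_B}(\hat I^B)$ can equal $(4k-4\varepsilon)F$, as the example below shows.

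For this reason your construction cannot reach the stated constant when $4\nmid n$. Your extra customers never contract, so bounding the blocks' payment at $\hat J$ costs an additional $\varepsilon$, and the argument then gives only $5\varepsilon+2\delta\ge k$. The loss cannot be recovered within your three profiles. Take $n=5$, $F=1$, $\varepsilon=0.22$, $\delta=0$, with payment vectors $(1.07,0.78,0.3,0.78,1.07)$ at $\hat I^A$, $(0.78,0.3,0.78,1.07,1.07)$ at $\hat I^B$ and $(1.11,0.5,0.5,1.11,0.78)$ at $\hat J$. These satisfy every $\varepsilon$-core constraint at the three profiles and both CCP inequalities, although $4\varepsilon+2\delta<1$.

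The paper avoids this by giving the $r$ extra customers nondegenerate isolated intervals that they contract slightly in the first deviation only. Then $D_A=G_2\cup G_4\cup R$ and $D_B=G_1\cup G_3$ partition $N$, and efficiency at $J$ gives $x_{D_A}(J)+x_{D_B}(J)=(3k+r)F$ with no $\varepsilon$ loss. Deleting $G_4\cup R$ at $I^A$ saves $k+r$ dates, so $x_{D_A}(I^A)\ge(2k+r-2\varepsilon)F$. Combined with $x_{D_B}(I^B)\ge(2k-2\varepsilon)F$ and the two $\delta$-CCP inequalities, this yields $3k+r\ge 4k+r-4\varepsilon-2\delta$. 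When $4\mid n$ there are no extra customers, and your construction works once the first gap is closed.
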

\begin{proof}
Write $n=4k+r$, where $k=\lfloor n/4\rfloor$ and $0\leq r\leq3$.
Place $k$ copies of the four-customer construction in disjoint parts
of the time line and add $r$ isolated customers. These additional
customers make a small contraction only in the first deviation.
Choose all contractions small enough to satisfy the concealment cap.

Let $G_j$ contain the customers in position $j$ across the copies,
and let $R$ contain the additional customers. The contracting sets are
\[
D_A=G_2\cup G_4\cup R,\qquad
D_B=G_1\cup G_3,
\]
which partition $N$.

At $I^A$, deleting $G_2$ saves $k$ dates, while deleting $G_4\cup R$
saves $k+r$. At $I^B$, deleting either $G_1$ or $G_3$ saves $k$ dates.
Applying \eqref{eq:approx-lower-bound} separately to these groups
and adding gives
\[
x_{D_A}(I^A)\geq(2k+r-2\varepsilon)F,\qquad
x_{D_B}(I^B)\geq(2k-2\varepsilon)F.
\]
Since $J$ requires $3k+r$ dates, efficiency and the two
$\delta$-CCP inequalities imply
\[
\begin{aligned}
(3k+r)F
&=x_{D_A}(J)+x_{D_B}(J)\\
&\geq x_{D_A}(I^A)+x_{D_B}(I^B)-2\delta F\\
&\geq(4k+r-4\varepsilon-2\delta)F.
\end{aligned}
\]
Thus $4\varepsilon+2\delta\geq k$.
The same bound holds under $\delta$-SCCP, which implies $\delta$-CCP.
\end{proof}

For $n=4$, the condition becomes $2\varepsilon+\delta\geq\frac12$. More generally, an exact core ($\varepsilon=0$) requires $\delta\geq\frac12\left\lfloor\frac n4\right\rfloor$, whereas exact CCP ($\delta=0$) requires $\varepsilon\geq\frac14\left\lfloor\frac n4\right\rfloor$. These are necessary conditions, not sufficient ones.

\subsection{Approximate requirements on the one-sided domains}
\label{sec:approx-one-sided}

The six-customer obstruction gives the corresponding one-sided bound.

\begin{theorem}[One-sided approximate trade-off]
\label{thm:approx-one-sided}

Let $n\geq6$ and $\bar q>0$. Every $\varepsilon$-core rule satisfying
$\delta$-CCP for all left-only contractions with
$q_i(I,J)\leq\bar q$ for every $i\in D(I,J)$ must satisfy
\[
5\varepsilon+2\delta
\geq
\left\lfloor\frac n6\right\rfloor.
\]
The same condition holds on the right-only domain and under
$\delta$-SCCP.
\end{theorem}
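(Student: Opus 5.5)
The plan is to replicate the six-customer obstruction of Theorem~\ref{thm:one-sided-ccp} across disjoint blocks of the time line, as in the proof of Theorem~\ref{thm:approx}, and then combine the lower bound \eqref{eq:approx-lower-bound} with efficiency at the common reported profile. Write $n=6k+r$ with $k=\lfloor n/6\rfloor$ and $0\leq r\leq5$. Place $k$ translated copies of $I_\rho^A$, $J_\rho$, $I_\rho^B$ in pairwise disjoint, well-separated parts of the line. Add $r$ isolated customers with nondegenerate intervals far from everything else.

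In the first deviation, each isolated customer makes a tiny left-only contraction, and in the second she does not move. Choose $\rho$ and these contractions small enough that every concealment ratio is at most $\bar q$; by the computation in the robustness subsection, the ratios in the copies are at most $\rho$. Let $G_j$ be the customers in position $j$ across the copies and $R$ the isolated customers. Then the contracting sets are
\[
D_A=G_2\cup G_3\cup G_5\cup R,\qquad D_B=G_4\cup G_6 .
\]
These are disjoint, and their union is $N\setminus G_1$. All deviations are left-only, so the argument transfers to the right-only domain by reflection. It also applies under $\delta$-SCCP, since $\delta$-SCCP implies $\delta$-CCP.

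Next I would verify the relevant values of $\tau$ copy by copy; these are additive across the copies and the isolated customers.
\begin{itemize}
\item At $I^A$: deleting $G_3$ saves $k$ dates, and deleting $G_5\cup R$ saves $k+r$. These are the customers paying $F$ in the unique core allocation, and the claim is checked directly from the intervals. Applying \eqref{eq:approx-lower-bound} to these two groups gives $x_{D_A}(I^A)\geq(2k+r-2\varepsilon)F$, using that payments of $G_2$ are nonnegative. To justify nonnegativity with $\varepsilon>0$, I would instead group $G_2\cup G_3$ so that no sign issue arises: deleting both saves $k$ dates.
\item At $I^B$: deleting $G_4$ and deleting $G_6$ each save $k$ dates, so $x_{D_B}(I^B)\geq(2k-2\varepsilon)F$.
\item At $J$: each copy needs $4$ dates; for instance $\{1,2,4,6\}$ and $\{1,3,5,6\}$ are disjoint families. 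Without customer $1$, any disjoint family has at most $3$ members. Hence $\tau_J(N)=4k+r$ and $\tau_J(N\setminus G_1)=3k+r$, so \eqref{eq:approx-lower-bound} gives $x_{G_1}(J)\geq(k-\varepsilon)F$.
\end{itemize}

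Efficiency at $J$ then yields
\[
x_{D_A}(J)+x_{D_B}(J)=x_N(J)-x_{G_1}(J)\leq(3k+r+\varepsilon)F .
\]
The two $\delta$-CCP inequalities give
\[
x_{D_A}(J)+x_{D_B}(J)\geq(4k+r-4\varepsilon-2\delta)F .
\]
Comparing the two bounds gives $5\varepsilon+2\delta\geq k$.

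The main obstacle is the careful verification of the $\tau$ values at the three profiles, especially that removing $G_5\cup R$ at $I^A$ saves exactly $k+r$ dates and that $G_1$ is indispensable at $J$. The worry is that in the $I^A$ copy an alternative disjoint family avoiding customer $5$ might exist, for instance one using customers $2$ and $4$. If removing $G_5$ alone saves nothing, I would fall back on the complement bound for $G_3\cup G_5$ jointly, or on a pair whose removal provably saves $2$ per copy, and recheck that the $\varepsilon$-count stays at $2$ on each side. The final coefficient $5=4+1$ depends on this: four group bounds come from the two initial profiles and one extra $\varepsilon$ from the outsider group $G_1$ at $J$.
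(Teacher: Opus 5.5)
Your proposal is correct and matches the paper's proof essentially step for step. It uses the same $k$ translated copies plus $r$ isolated customers who contract only in the first deviation, the same groupings ($G_2\cup G_3$ and $G_5\cup R$ at $I^A$, $G_4$ and $G_6$ at $I^B$), and the extra bound $x_{G_1}(J)\geq(k-\varepsilon)F$ that produces the coefficient $5$. The worry you flag is unfounded: within a copy at $I^A$ only consecutive pairs among customers $1,\dots,5$ intersect and customer $6$ is isolated, so without customer $5$ the largest disjoint family (e.g.\ $\{2,4,6\}$ or $\{1,3,6\}$) has three members, and deleting $G_5\cup R$ indeed saves exactly $k+r$ dates.
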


\begin{proof}
Write $n=6k+r$, where $k=\lfloor n/6\rfloor$ and $0\leq r\leq5$.
Place $k$ copies of the six-customer construction in disjoint parts
of the time line and add $r$ isolated customers. These additional
customers make a small left-only contraction only in the first
deviation. Choose all contractions to satisfy the concealment cap.

Let $G_j$ contain the customers in position $j$ across the copies,
and let $R$ contain the additional customers. Then
\[
D_A=G_2\cup G_3\cup G_5\cup R,\qquad
D_B=G_4\cup G_6,
\]
so $D_A,D_B,G_1$ partition $N$.

At $I^A$, deleting $G_2\cup G_3$ saves $k$ dates, while deleting
$G_5\cup R$ saves $k+r$. At $I^B$, deleting either $G_4$ or $G_6$
saves $k$ dates. Applying \eqref{eq:approx-lower-bound} separately
and adding gives
\[
x_{D_A}(I^A)\geq(2k+r-2\varepsilon)F,\qquad
x_{D_B}(I^B)\geq(2k-2\varepsilon)F.
\]
Deleting $G_1$ at $J$ saves $k$ dates, so the same equation gives
$x_{G_1}(J)\geq(k-\varepsilon)F$.
Since $J$ requires $4k+r$ dates, efficiency and $\delta$-CCP imply
\[
\begin{aligned}
(4k+r)F
&=x_{D_A}(J)+x_{D_B}(J)+x_{G_1}(J)\\
&\geq x_{D_A}(I^A)+x_{D_B}(I^B)-2\delta F
      +(k-\varepsilon)F\\
&\geq(5k+r-5\varepsilon-2\delta)F.
\end{aligned}
\]
Hence $5\varepsilon+2\delta\geq k$.
Reflection gives the right-only result, and $\delta$-SCCP implies
$\delta$-CCP.
\end{proof}

For $n=6$, the bound becomes $5\varepsilon+2\delta\geq1$. More generally, setting $\varepsilon=0$ implies $\delta\geq
\frac12\left\lfloor\frac n6\right\rfloor$, and setting $\delta=0$ implies $\varepsilon\geq \frac15\left\lfloor\frac n6\right\rfloor$. These conditions are necessary, not sufficient.

\section{Discussion and Conclusion}\label{sec:conclusion}

\subsection{Pareto contraction-proofness}
We consider a weaker incentive requirement. A cost-sharing rule $x$
is Pareto contraction-proof if contracting customers cannot all weakly reduce
their payments, with at least one of them paying strictly less. Formally, for
every profile $I$, there is no admissible submitted profile $J$ such that
\[
x_i(J)\leq x_i(I)
\qquad\text{for every } i\in D(I,J),
\]
with strict inequality for at least one contracting customer. Note that CCP
implies Pareto contraction-proofness.
\subsubsection{Two-sided domain}

The four-customer construction in Theorem~\ref{thm:ccp-impossibility} also establishes impossibility under Pareto
contraction-proofness. For the contraction from $I_\rho^A$ to $J_\rho$,
customer 4's payment remains $F$, while customer 2's payment changes
from $F$ to $\alpha F$. Therefore, Pareto contraction-proofness  requires
$\alpha=1$. For the contraction from $I_\rho^B$ to $J_\rho$, customer 1's
payment remains $F$, while customer 3's payment changes from $F$ to
$(1-\alpha)F$. Therefore, Pareto contraction-proofness requires $\alpha=0$. This contradicts the requirement $\alpha=1$ obtained from the first contraction.

For $n>4$, the argument extends by adding isolated customers as in the
proof of Theorem~\ref{thm:ccp-impossibility}. Since the construction is unchanged, the impossibility
also persists under arbitrarily small relative concealment. Moreover,
SCCP implies Pareto contraction-proofness, so the positive results for
$n\leq3$ also apply. Thus, on the two-sided domain, a core-selecting
Pareto contraction-proof rule exists if and only if $n\leq3$.

\subsubsection{One-sided domain}

We show that the greedy rule is Pareto contraction-proof on the
left-only domain for every number of customers.

\begin{proposition}
\label{prop:one-sided-pareto}
For every $n$, the greedy rule $x^{\mathrm G}$ is core-selecting and
Pareto contraction-proof on the left-only domain. The reflected greedy rule $x^{\mathrm L}$ is core-selecting and
Pareto contraction-proof on the right-only domain.
\end{proposition}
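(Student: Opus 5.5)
The plan is to take core selection directly from Theorem~\ref{thm:greedy-icp}, since the greedy rule is core-selecting at every profile and this does not depend on the domain. All the work is in Pareto contraction-proofness. Payments under $x^{\mathrm G}$ take only the values $0$ and $F$. So a Pareto-improving contraction $I\to J$ would need two things: some contracting customer $k\in D(I,J)$ with $k\in P(I)$ and $k\notin P(J)$, and no contracting customer outside $P(I)$ entering $P(J)$. I will show these cannot both hold. Concretely, I aim to prove the dichotomy: either $P(I)\subseteq P(J)$, or some contracting customer with $i\notin P(I)$ satisfies $i\in P(J)$.

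The main step is to run the greedy procedure on $I$ and on $J$ in parallel and compare them step by step. First I record a structural fact about the right-endpoint greedy. Suppose that after some step the last activated date is $d$, and every earlier date is smaller. Then a customer is still unserved exactly when $\ell_j>d$. Indeed, every customer has $r_j$ at least the current minimum right endpoint, so once $\ell_j\le d$ the customer is covered by some activated date.

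Next I induct on the step number. Suppose the first $t-1$ selections coincide at $I$ and $J$, with last date $d$; this date is the same at both profiles because right endpoints are verified. On the left-only domain $\ell_j(J)\ge \ell_j(I)$ and $r_j$ is unchanged. Hence the unserved set at $J$, namely $\{j:\ell_j(J)>d\}$, contains the unserved set at $I$. The rule picks the customer with the smallest right endpoint, ties broken by index, and right endpoints are identical at both profiles.

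If the two picks differ, the pick $c$ at $J$ cannot be unserved at $I$, since otherwise it would beat the $I$-pick as well. So $\ell_c(I)\le d<\ell_c(J)$, which means $c$ contracted. Also $c$ was already served at $I$ by an earlier date, and earlier dates are earlier customers' right endpoints below $r_c$. Therefore $c\notin P(I)$ and $c\in P(J)$, so $c$'s payment rises from $0$ to $F$. This gives the second branch of the dichotomy.

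If the picks agree at every step while $I$ still has unserved customers, then $J$ does too, because its unserved set is a superset. So the $I$-sequence is a prefix of the $J$-sequence and $P(I)\subseteq P(J)$, which is the first branch.

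In either branch no contraction makes every contracting customer weakly better off with one strictly better off. I expect the main obstacle to be the justification that the unserved set is exactly $\{j:\ell_j>d\}$, together with the handling of ties. I would make this precise using the facts that selected dates are increasing and each selected customer has the minimum remaining right endpoint. The right-only statement for $x^{\mathrm L}$ then follows by the reflection $[\ell_i,r_i]\mapsto[-r_i,-\ell_i]$. This reflection maps $x^{\mathrm L}$ to $x^{\mathrm G}$ and right-only contractions to left-only ones. Index tie-breaking is preserved because the reflection does not change customer indices.
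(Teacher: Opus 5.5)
Your proposal is correct and follows essentially the paper's argument. Both compare the greedy runs at $I$ and $J$ along the right-endpoint order fixed by verification, and show that the first point of divergence is a contracting customer whose payment rises from $0$ to $F$; your explicit characterization of the unserved set as $\{j:\ell_j>d\}$ makes precise the scanning step that the paper uses implicitly.
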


\begin{proof}
Core selection follows from Theorem~\ref{thm:greedy-icp}.
Consider a left-only contraction from $I$ to $J$. Order customers by
their right endpoints, breaking ties by the smallest customer index.
Since right endpoints are verified, this order is unchanged.

If all payments remain unchanged, no customer benefits. Otherwise,
let $i$ be the first customer in this order whose payment changes.
All earlier selected customers, and hence all earlier service dates,
are the same at both profiles. If customer $i$ were selected at $I$,
none of those dates would belong to $I_i$, and hence none would
belong to $J_i$. Therefore, she would also be selected at $J$,
contrary to her payment changing. Thus, her payment increases
from zero to $F$.

Moreover, customer $i$ must have contracted her interval: if her
interval were unchanged, the same earlier service dates would give
the same selection decision. Hence, whenever payments change, at
least one contracting customer pays more. This proves Pareto
contraction-proofness. Reflection of the time axis gives the
right-only result.
\end{proof}

\subsection{Shapley value and nucleolus}

Our results also raise the question of whether standard cooperative solution concepts satisfy the requirements considered in this paper. Two standard candidates for allocating the service cost are the
Shapley value and the nucleolus. Consider the profile $I=([0,2],[1,4],[3,5])$, whose unique core allocation is $(F,0,F)$. For this profile, the Shapley value is $\phi(I)=\left(\frac{5F}{6},\frac{F} {3},\frac{5F}{6}\right)$.  Therefore, customers 1 and 2 pay a total of $7F/6$, although their stand-alone cost is $F$. Hence, the Shapley value is not core-selecting.

The nucleolus belongs to the core whenever the core is nonempty, so its allocation at $I$ is $\nu(I)=(F,0,F)$. Now suppose that customers 1 and 3 contract their intervals, giving $J=([0,1/2],[1,4],[7/2,5])$.  At $J$, the core consists of allocations of the form $(F,tF,(1-t)F)$, where $0\leq t\leq1$. Customers 2 and 3 are symmetric in the reported cost game, so the nucleolus assigns $\nu(J)=(F,F/2,F/2)$. Customer 1's payment remains unchanged, while customer 3's payment decreases from $F$ to $F/2$. Thus, the nucleolus fails both CCP and Pareto contraction-proofness on the two-sided domain, even with three customers.

\subsection{Conclusion}

This paper studies whether a cost-sharing rule can consistently select core allocations while preventing coalitions from reducing their joint payment by concealing service-time flexibility. On the two-sided domain, a core-selecting SCCP rule exists if and only if $n\leq3$. For $n\geq4$, no core-selecting rule satisfies even CCP. The four-customer contradiction requires only two contracting customers in each deviation and persists under arbitrarily small concealment. Customers are fully free to contract from the left, from the right, or from both sides; therefore the impossibility applies to the unrestricted two-sided domain.

Verifying one endpoint changes the exact boundary. If all customers share a common verified endpoint, the equal-split rule is core-selecting and SCCP for every population size. On the left-only or right-only domains, the extremal-endpoint rule is core-selecting and SCCP for $n\leq5$, whereas no core-selecting CCP rule exists for $n\geq6$. Uniform endpoint verification  raises the boundary from three to five, while a common verified endpoint removes the population restriction entirely. The four-customer impossibility also holds under Pareto contraction-proofness. On either one-sided domain, however, a core-selecting Pareto contraction-proof rule exists for every number of customers.

The obstruction also produces population-dependent necessary  trade-offs under approximate requirements. On the  two-sided domain, every $\varepsilon$-core, $\delta$-CCP rule with $n\geq4$ must satisfy
$4\varepsilon+2\delta\geq\left\lfloor\frac{n}{4}\right\rfloor$. On either one-sided domain, every such rule with $n\geq6$ must satisfy $5\varepsilon+2\delta\geq\left\lfloor\frac{n}{6}\right\rfloor$. Both conditions remain necessary under $\delta$-SCCP and under arbitrarily small relative concealment.

\bibliographystyle{plainnat}
\bibliography{references}

@article{agatz2011,
  title={Time slot management in attended home delivery},
  author={Agatz, Niels and Campbell, Ann and Fleischmann, Moritz and Savelsbergh, Martin},
  journal={Transportation Science},
  volume={45},
  number={3},
  pages={435--449},
  year={2011},
  publisher={Informs}
}

@article{baheltrudeau2022,
  title={Minimum coloring problems with weakly perfect graphs},
  author={Bahel, Eric and Trudeau, Christian},
  journal={Review of Economic Design},
  volume={26},
  number={2},
  pages={211--231},
  year={2022},
  publisher={Springer}
}

@article{bietenhader2006,
  title={Core stability of minimum coloring games},
  author={Bietenhader, Thomas and Okamoto, Yoshio},
  journal={Mathematics of Operations Research},
  volume={31},
  number={2},
  pages={418--431},
  year={2006},
  publisher={INFORMS}
}

@article{deligkas2024,
  title={Truthful interval covering},
  author={Deligkas, Argyrios and Filos-Ratsikas, Aris and Voudouris, Alexandros A},
  journal={Autonomous Agents and Multi-Agent Systems},
  volume={38},
  number={2},
  pages={41},
  year={2024},
  publisher={Springer}
}

@article{deng1999,
  title={Algorithmic aspects of the core of combinatorial optimization games},
  author={Deng, Xiaotie and Ibaraki, Toshihide and Nagamochi, Hiroshi},
  journal={Mathematics of Operations Research},
  volume={24},
  number={3},
  pages={751--766},
  year={1999},
  publisher={INFORMS}
}

@article{frisk2010,
  title={Cost allocation in collaborative forest transportation},
  author={Frisk, Mikael and G{\"o}the-Lundgren, Maud and J{\"o}rnsten, Kurt and R{\"o}nnqvist, Mikael},
  journal={European journal of operational research},
  volume={205},
  number={2},
  pages={448--458},
  year={2010},
  publisher={Elsevier}
}

@article{greenlaffont1986,
  title={Partially verifiable information and mechanism design},
  author={Green, Jerry R and Laffont, Jean-Jacques},
  journal={The Review of Economic Studies},
  volume={53},
  number={3},
  pages={447--456},
  year={1986},
  publisher={Wiley-Blackwell}
}

@article{guajardoronnqvist2016,
  title={A review on cost allocation methods in collaborative transportation},
  author={Guajardo, Mario and R{\"o}nnqvist, Mikael},
  journal={International transactions in operational research},
  volume={23},
  number={3},
  pages={371--392},
  year={2016},
  publisher={Wiley Online Library}
}

@article{hamers2014,
  title={Monotonic stable solutions for minimum coloring games},
  author={Hamers, Herbert and Miquel, Silvia and Norde, Henk},
  journal={Mathematical Programming},
  volume={145},
  number={1},
  pages={509--529},
  year={2014},
  publisher={Springer}
}

@article{kartiktercieux2012,
  title={Implementation with evidence},
  author={Kartik, Navin and Tercieux, Olivier},
  journal={Theoretical Economics},
  volume={7},
  number={2},
  pages={323--355},
  year={2012},
  publisher={Wiley Online Library}
}

@article{klein2019,
  title={Differentiated time slot pricing under routing considerations in attended home delivery},
  author={Klein, Robert and Neugebauer, Michael and Ratkovitch, Dimitri and Steinhardt, Claudius},
  journal={Transportation Science},
  volume={53},
  number={1},
  pages={236--255},
  year={2019},
  publisher={INFORMS}
}

@article{krahmerstrausz2025,
  title={Unidirectional incentive compatibility},
  author={Kr{\"a}hmer, Daniel and Strausz, Roland},
  journal={Journal of Economic Theory},
  volume={228},
  pages={106051},
  year={2025},
  publisher={Elsevier}
}

@article{moulin1999,
  title={Incremental cost sharing: Characterization by coalition strategy-proofness},
  author={Moulin, Herv{\'e}},
  journal={Social Choice and Welfare},
  volume={16},
  number={2},
  pages={279--320},
  year={1999},
  publisher={Springer}
}

@article{moulinshenker2001,
  title={Strategyproof sharing of submodular costs: budget balance versus efficiency},
  author={Moulin, Herv{\'e} and Shenker, Scott},
  journal={Economic Theory},
  volume={18},
  number={3},
  pages={511--533},
  year={2001},
  publisher={Springer}
}

@article{munich2024,
  title={Schedule situations and their cooperative game theoretic representations},
  author={Munich, L{\'e}a},
  journal={European Journal of Operational Research},
  volume={316},
  number={2},
  pages={767--778},
  year={2024},
  publisher={Elsevier}
}

@article{okamoto2008,
  title={Fair cost allocations under conflicts—a game-theoretic point of view—},
  author={Okamoto, Yoshio},
  journal={Discrete Optimization},
  volume={5},
  number={1},
  pages={1--18},
  year={2008},
  publisher={Elsevier}
}

@article{singhwittman2001,
  title={Implementation with partial verification},
  author={Singh, Nirvikar and Wittman, Donald},
  journal={Review of Economic Design},
  volume={6},
  number={1},
  pages={63--84},
  year={2001},
  publisher={Springer}
}

@article{solomon1987,
  title={Algorithms for the vehicle routing and scheduling problems with time window constraints},
  author={Solomon, Marius M},
  journal={Operations research},
  volume={35},
  number={2},
  pages={254--265},
  year={1987},
  publisher={Informs}
}

@article{solomondesrosiers1988,
  title={Survey paper—{T}ime window constrained routing and scheduling problems},
  author={Solomon, Marius M and Desrosiers, Jacques},
  journal={Transportation science},
  volume={22},
  number={1},
  pages={1--13},
  year={1988},
  publisher={INFORMS}
}

@article{tamir2023,
  title={Cost-sharing games in real-time scheduling systems},
  author={Tamir, Tami},
  journal={International Journal of Game Theory},
  volume={52},
  number={1},
  pages={273--301},
  year={2023},
  publisher={Springer}
}

@article{ulmer2024,
  title={Optimal service time windows},
  author={Ulmer, Marlin W and Goodson, Justin C and Thomas, Barrett W},
  journal={Transportation Science},
  volume={58},
  number={2},
  pages={394--411},
  year={2024},
  publisher={Informs}
}

@article{vinsensius2020,
  title={Dynamic incentive mechanism for delivery slot management in e-commerce attended home delivery},
  author={Vinsensius, Albert and Wang, Yuan and Chew, Ek Peng and Lee, Loo Hay},
  journal={Transportation Science},
  volume={54},
  number={3},
  pages={567--587},
  year={2020},
  publisher={INFORMS}
}

@article{vijayalakshmi2026,
  title={Interval Scheduling Games with Color-Based Concurrent Jobs},
  author={Ravindran Vijayalakshmi, Vipin and Schr{\"o}der, Marc and Tamir, Tami},
  journal={ACM Transactions on Economics and Computation},
  year={2026},
  publisher={ACM New York, NY}
}

@article{visser2024,
  title={Managing concurrent interactions in online time slot booking systems for attended home delivery},
  author={Visser, Thomas R and Agatz, Niels and Spliet, Remy},
  journal={Transportation Science},
  volume={58},
  number={5},
  pages={1056--1075},
  year={2024},
  publisher={INFORMS}
}

\end{document}